\newcommand{\R}{\mathbb{R}}
\newcommand{\loss}{\mathcal{L}}
\newcommand{\disc}{\mathcal{S}}
\newcommand{\D}{\mathcal{D}}
\newcommand{\T}{\mathcal{T}}
\newcommand{\K}{\mathcal{K}}
\newcommand{\F}{\mathcal{F}}
\newcommand\sqnorm[1]{\Vert #1 \Vert^2}
\newcommand\nsqnorm[1]{\Vert #1 \Vert}
\newcommand{\argmin}{\mathop{\mathrm{arg\,min}}}
\newtheorem{lemma}{Lemma}
\newtheorem{assumption}{Assumption}
\newtheorem{definition}{Definition}
\newtheorem{remark}{Remark}
\begin{document}

\title[]{Computed Tomography Reconstruction Using Deep Image Prior and Learned Reconstruction Methods}

\author{Daniel Otero Baguer, Johannes Leuschner, Maximilian Schmidt}

\address{Center for Industrial Mathematics (ZeTeM), University of Bremen, Bibliothekstra{\ss}e 5, 28359 Bremen, Germany}
\ead{\{otero, jleuschn, schmidt4\}@uni-bremen.de}
\vspace{10pt}
\begin{indented}
\item[]February 2020
\end{indented}

\begin{abstract}

In this work, we investigate the application of deep learning methods for computed tomography in the context of having a low-data regime. As motivation, we review some of the existing approaches and obtain quantitative results after training them with different amounts of data. We find that the learned primal-dual has an outstanding performance in terms of reconstruction quality and data efficiency. However, in general, end-to-end learned methods have two issues: a) lack of classical guarantees in inverse problems and b) lack of generalization when not trained with enough data. To overcome these issues, we bring in the deep image prior approach in combination with classical regularization. The proposed methods improve the state-of-the-art results in the low data-regime.

\end{abstract}

\section{Introduction}

Deep learning approaches for solving ill-posed inverse problems currently achieve state-of-the-art reconstruction quality in terms of quantitative results. However, they require large amounts of training data, i.e., pairs of ground truths and measurements, and it is not clear how much is necessary to be able to generalize well. For ill-posed inverse problems arising in medical imaging, such as Magnetic Resonance Imaging (MRI), guided Positron Emission Tomography (PET), Magnetic Particle Imaging (MPI), or Computed Tomography (CT), obtaining such high amounts of training data is challenging. In particular ground truth data is difficult to obtain, as, for example, it is of course impossible to take a photograph of the inside of the body. What learned methods usually consider as ground truths are simulations or high-dose reconstructions obtained with classical methods, such as filtered back-projection (FBP), which work considerably well in the presence of a sufficiently large amount of low-noise measurements. In MRI, it is well possible to obtain those reconstructions, but it requires much time for the acquisition process. Therefore a potential of learned approaches in MRI is to reduce the acquisition times \cite{zbontar2018fastmri}. In other applications such as CT, it would be necessary to expose patients to high doses of X-ray radiation to obtain the required training ground truths.

There is yet another approach called Deep Image Prior (DIP) \cite{ulyanov2018dip} that also uses deep neural networks, for example, a U-Net. However, there is a remarkable difference: it does not need any learning, i.e., the weights of the network are not trained. This approach seems to have low applicability because it requires much time to reconstruct in contrast to learned methods. In the applications initially considered, for example, inpainting, denoising, and super-resolution, it is much easier to obtain or simulate data, which allows for the use of learned methods, and the DIP does not seem to have an advantage. However, these applications are not ill-posed inverse problems in the sense of Nashed \cite{nashed1987ill_posed}. The main issue is that, in some cases, they do have a non-trivial null space, which makes the solution not unique.

In this work, we aim to explore the application of the DIP together with other deep learning methods for obtaining CT reconstructions in the context of having a rather low-data regime. The structure of the paper and the main contributions are organized as follows. In Section~\ref{sec:ct}, we briefly describe the CT reconstruction problem. Section~\ref{sec:review} provides a summary of related articles and approaches, together with some background and insights that we use as motivation. The experienced reader may skip Sections \ref{sec:ct} and \ref{sec:review} and go directly to Section~\ref{sec:regularization}, where we introduce the combination of the DIP with classical regularization methods and obtain theoretical guarantees. Following, in Section~\ref{sec:dip_initial}, we propose a similar approach to the DIP but using an initial reconstruction given by any end-to-end learned method. Finally, in Section~\ref{sec:benchmark}, we present a benchmark of the analyzed methods using different amounts of data from two standard datasets.

\section{Computed Tomography}
\label{sec:ct}

Computed tomography (CT) is one of the most valuable technologies in modern medical imaging \cite{buzug2008computed_tomography}. It allows for a non-invasive acquisition of the inside of the human body using X-rays. Since the introduction of CT in the 1970s, technical innovations like new scan geometries pushed the limits on speed and resolution. Current research focuses on reducing the potentially harmful radiation a patient is exposed to during the scan \cite{buzug2008computed_tomography}. These include measurements with lower intensity or at fewer angles. Both approaches introduce particular challenges for reconstruction methods, that can severely reduce the image quality. In our work, we compare multiple reconstruction methods on these low-dose scenarios for a basic 2D parallel beam geometry (cf.\ Figure \ref{fig:parallel_beam}).

In this case, the forward operator is given by the 2D Radon transform \cite{radon1986radon_trafo} and models the attenuation of the X-ray when passing through a body. We can parameterize the path of an X-ray beam by the distance from the origin $s \in \R$ and angle $\varphi \in [0,\pi]$
\begin{equation}
    L_{s,\varphi}(t) = s \omega\left(\varphi\right) + t \omega^\perp\left(\varphi\right), \quad \omega\left(\varphi\right) := [\cos(\varphi),\, \sin(\varphi)]^T.
\end{equation}
The Radon transform then calculates the integral along the line for parameters $s$ and $\varphi$
\begin{equation}
    A x(s,\varphi) = \int_{\mathbb{R}} x\left(L_{s,\varphi}(t) \right) \, \mathrm{d}t.
\end{equation}
According to Beer-Lambert's law, the result is the logarithm of the ratio between the intensity $I_0$ at the X-ray source and $I_1$ at the detector
\begin{equation}
    A x(s,\varphi) = -\ln \left(\frac{I_1\left(s, \varphi\right)}{I_0\left(s, \varphi\right)}\right) = y\left(s, \varphi\right).
\end{equation}
Calculating the transform for all pairs $(s, \varphi)$ results in a so-called \textit{sinogram}, which we also call observation. To get a reconstruction $\hat{x}$ from the sinogram, we have to invert the forward model. Since the Radon transform is linear and compact, the inverse problem is \textit{ill-posed} in the sense of Nashed \cite{nashed1987ill_posed, natterer2001math_tomography}.

\begin{figure}[htb]
    \centering
    \scalebox{0.7}{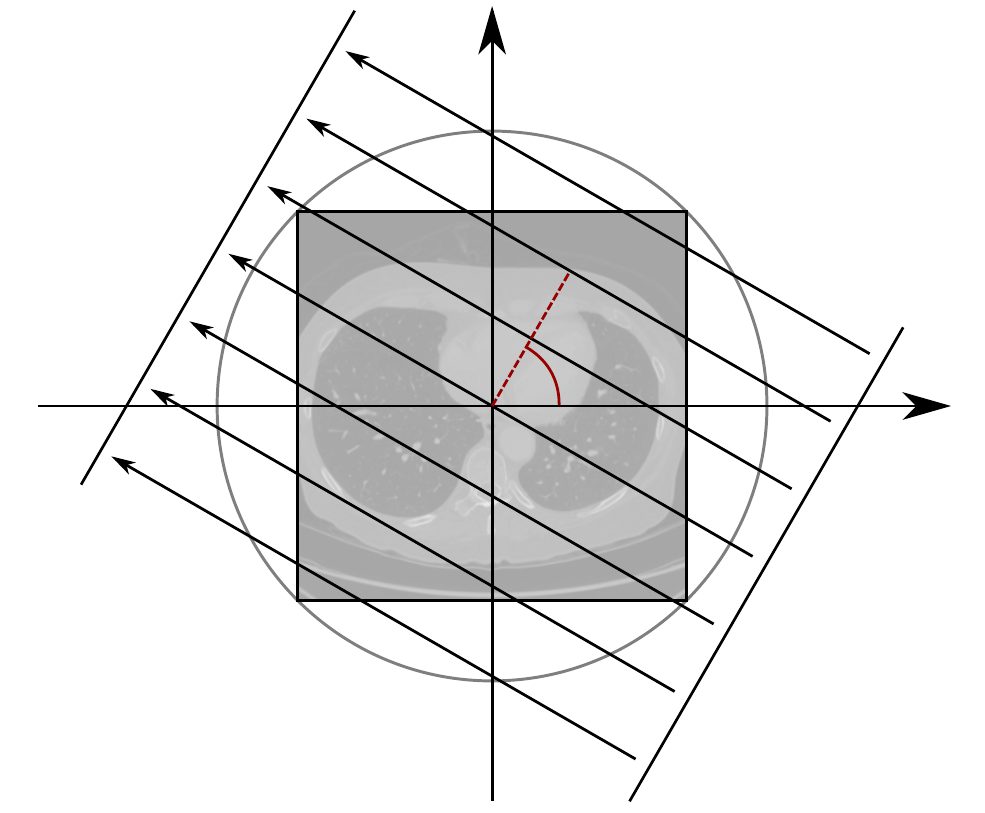}
    \caption{Parallel beam geometry}
    \label{fig:parallel_beam}
\end{figure}

\section{Related approaches and motivation}
\label{sec:review}
In this section, we first review and describe some of the existing data-driven and classical methods for solving ill-posed inverse problems, which have also been applied to obtain CT reconstructions. Following, we review the DIP approach and related works.

In inverse problems one aims at obtaining an unknown quantity, in this case the scan of the human body, from indirect measurements that frequently contain noise \cite{engl, louis89, rieder}. The problem is modeled by an operator $A: X \rightarrow Y$ between Banach or Hilbert spaces $X$ and $Y$ and the measured noisy data or observation
\begin{equation}
y^\delta = A x^\dagger + \tau.
\label{eq:inv_problem}
\end{equation}
The aim is to obtain an approximation $\hat{x}$ for $x^\dagger$ (the true solution), where $\tau$, with $\|\tau\| \le \delta,$ describes the noise in the measurement.

Classical approaches for inverse problems include linear pseudo inverses given by filter functions \cite{louis89} or  non-linear regularized inverses given by the variational approach
\begin{equation}
\label{eq:variational}
\T_\alpha(y^\delta) \in \arg\min_{x\in \D} \disc(Ax,\, y^\delta) + \alpha J(x),
\end{equation}
where $\disc: Y\times Y \to \R$ is the data discrepancy, $J:X \to \R \cup \{\infty\}$ is the regularizer, $\D:=\D(A) \cap \D(J)$ and $\D(A)$, $\D(J)$ are the domains of $A$ and $J$ respectively. Examples of hand-crafted regularizers/priors are $\Vert x \Vert^2$, $\Vert x \Vert_1$ and Total Variation (TV). The value of the regularization parameter $\alpha$ should be carefully selected. One way to do that, in the presence of a validation dataset with some ground truth and observation pairs, is to do a line-search and select the $\alpha$ that yields the best performance on average, assuming there is a uniform noise level. Given validation data $\{x_i^\dagger,\ y_i^\delta\}_{i=1}^N$, the data-driven parameter choice would be
\begin{equation}
\hat{\alpha} := \argmin_{\alpha \in \mathbb{R}_+} {\sum_{i=1}^N{\ell(\T_\alpha(y_i^\delta),\, x_i^\dagger)}},
\end{equation}
where $\ell: X \times X \to \R$ is some similarity measure, such as PSNR or SSIM.

Data-driven regularized inverses for solving inverse problems in imaging have recently had great success in terms of reconstruction quality \cite{adler2018learned, arridge2019, hauptmann2018model}. Three main classes are: end-to-end learned methods \cite{jonas2017iterative, adler2018learned, hauptmann2018model, unser2017, Schwab2019}, learned regularizers \cite{haltmeier18, lunz2018adversarial} and generative networks \cite{bora2017compressed}. For this study, we only focus on the end-to-end learned methods.

\subsection{End-to-end learned methods}
\label{sec:learned_methods}

In this section, we briefly review the most successful end-to-end learned methods. Most of them were implemented and included in our benchmark.

\subsubsection{Post-processing}
This method aims at improving the quality of the filtered back-projection (FBP) reconstructions from noisy or few measurements by applying learned post-processing. Recent works \cite{chen2017convnet_ct, jin2017cnn_imaging, yang2018gan_ct} have successfully used a convolutional neural network (CNN), such as the U-Net \cite{ronneberger2015unet}, to remove artifacts from FBP reconstructions.
In mathematical terms, given a possibly regularized FBP operator $\T_\mathrm{FBP}$, the reconstruction is computed using a network $D_\theta : X \to X$ as
\begin{equation}
 \hat x := [D_\theta \circ \T_\mathrm{FBP}] (y^\delta)
\end{equation}
with parameters $\theta$ of the network that are learned from data.

\subsubsection{Fully learned} 

Related methods aim at directly learning the inversion process from data while keeping the network architecture as general as possible. This idea was successfully applied in MRI by the AUTOMAP architecture \cite{Zhu2018}. The main building blocks consist of fully connected layers. Depending on the problem, the number of parameters can grow quickly with the data dimension. For mapping from sinogram to reconstruction in the LoDoPaB-CT Dataset, such a layer would have over $1000 \times 513 \times 362^2 \approx 67 \cdot 10^9$ parameters. This makes the naive approach infeasible for large CT data.

He \etal \cite{he2020iradon} introduced an adapted two-part network, called iRadonMap. The first part reproduces the structure of the FBP. A fully connected layer is applied along $s$ and shared over the rotation angle dimension $\varphi$, playing the role of the filtering. For each reconstruction pixel $(i,j)$ only sinogram values on the sinusoid $s = i\cos(\varphi) + j\sin(\varphi)$ have to be considered and are multiplied by learned weights. For the example above, the number of parameters in this layer reduces to $513^2 + (362)^2\cdot1000 \approx 130 \cdot 10^6$. The second part consists of a post-processing network. We choose the U-Net architecture for our experiments, which allows for a direct comparison with the FBP + U-Net approach.

\subsubsection{Learned iterative schemes}

Another series of works \cite{jonas2017iterative, adler2018learned, hauptmann2018model} use CNNs to improve iterative schemes commonly used in inverse problems for solving (\ref{eq:variational}), such as gradient descent, proximal gradient descent or hybrid primal-dual algorithms.
The idea is to unroll these schemes with a small number of iterations and replace some operators by CNNs with parameters that are trained using ground truth and observation data pairs.
The simplest one is probably the proximal gradient descent, whose standard version is given by the iteration
\begin{equation}
 x^{(k+1)} = \phi_{J,\,\alpha,\,\lambda_k}(x^{(k)} - \lambda_k A^*(Ax^{(k)}- y^\delta)),
\end{equation}
for $k=0$ to $L-1$, where $\phi_{J,\,\alpha,\,\lambda}:X \to X$ is the proximal operator. The corresponding learned iterative scheme is a partially learned approach, where each iteration is performed by a convolutional network $\psi_{\theta_k}$ that includes the gradients of the data discrepancy and of the regularizer as input in each iteration. Moreover, the number of iterations is fixed and small, e.g.,\ $L=10$. The reconstruction operator is given by $\T_\theta: Y \to X$ with $\T_\theta(y^\delta) = x^{(L)}$ and
\begin{eqnarray*}
x^{(k+1)} &=\psi_{\theta_k}(x^{(k)},\ A^*(Ax^{(k)}- y^\delta), \nabla J(x^{(k)}))\\
x^{(0)} &= A^+(y^\delta)
\end{eqnarray*}
for any pseudo inverse $A^+$ of the operator $A$ and $\theta = (\theta_0, \dots, \theta_{L-1})$. Alternatively, $x^{(0)}$ could be just randomly initialized.

Similarly, more sophisticated algorithms, such as hybrid primal-dual algorithms, can be unrolled and trained in the same fashion. In this work, we used an implementation of the learned gradient descent \cite{jonas2017iterative} and the learned primal-dual method \cite{adler2018learned}.

\bigskip

The above mentioned approaches all rely on a parameterized operator $\T_\theta: Y \to X$, whose parameters $\theta$ are optimized using a training set of $N$ ground truth samples $x^{\dagger}_i$ and their corresponding noisy observations $y^{\delta}_i$. Usually, the empirical mean squared error is minimized, i.e.,
\begin{equation}
\hat{\theta} \in \argmin_{\theta \in \Theta} \frac{1}{N}\sum_{i=1}^N \sqnorm{\T_\theta(y^{\delta}_i) - x^{\dagger}_i}.
\end{equation}

After training, the reconstruction $\hat{x} \in X$ from a noisy observation $y^\delta \in Y$ is given by $\hat{x} = \T_{\hat{\theta}}(y^\delta)$. The main disadvantage of these approaches is that they do not enforce data consistency. As a consequence, some information in the observation could be ignored, yielding a result that might lack important features of the image. In medical imaging, this is critical since it might remove an indication of a lesion.

\subsubsection{Null Space Network}
In order to overcome this issue, in \cite{Schwab2019} the authors introduce a new approach called Null Space Network. It takes the form
\begin{equation}
\F_\theta := \textrm{Id}_X + (\textrm{Id}_X-A^+A)\Psi_\theta,
\end{equation}
where the function $\Psi_\theta:\ X \to X$ is defined by a neural network, $A^+$ is the pseudo inverse of $A$ and $\textrm{Id}_X-A^+A=P_{\mathrm{ker(A)}}$ is the projection onto the null space $\mathrm{ker}(A)$ of $A$.
Consequently, the null space network $\F_\theta$ satisfies the property $A\F_\theta(x) = Ax$ for all $x \in X$. When combined with the pseudo inverse $\T_\theta = \F_\theta \circ A^+$, this yields an end-to-end learned approach with data consistency. Theoretical results for this approach have been proved in \cite{Schwab2019}. We did not include this approach in the comparison, but leave it for a future study.

\subsection{Deep Image Prior}
\label{sec:dip}

The DIP is similar to the generative networks approach and the variational method. However, instead of having a regularization term $J(x)$, the regularization is incorporated by the reparametrization $x = \varphi(\theta,\, z)$, where $\varphi$ is a deep generative network with weights $\theta \in \Theta$, and $z$ is a fixed input, for example, random white noise. The approach is depicted in Figure~\ref{fig:dip-method} and consist in solving
\begin{equation}
\hat{\theta} \in \argmin_{\theta \in \Theta} \sqnorm{A\varphi(\theta,\, z) - y^\delta}, \ \ \ \ \ \hat{x}:=\varphi(\hat{\theta},\, z)\ .
\label{eq:dip_formulation}
\end{equation}
In the original method, the authors use gradient descent with early stopping to avoid reproducing noise. This is necessary due to the overparameterization of the network, which makes it able to reproduce the noise. The regularization is a combination of early stopping (similar to the Landweber iteration) and the architecture \cite{dittmer2019}. The drawback is that it is not clear how to choose when to stop. In the original work, they do it using a validation set and select the number of iterations that performs the best on average in terms of PSNR. 

The prior is related to the implicit structural bias of this kind of deep convolutional networks. In the original DIP paper \cite{ulyanov2018dip} and more recently in \cite{chakrabarty2019spectral, heckel2019denoising}, they show that convolutional image generators, optimized with gradient descent, fit \textit{natural} images faster than noise and learn to construct them from low to high frequencies. This effect is illustrated in Figure~\ref{fig:ellipses_dip_iterations}.

\begin{figure}[h]
    \centering
    \input{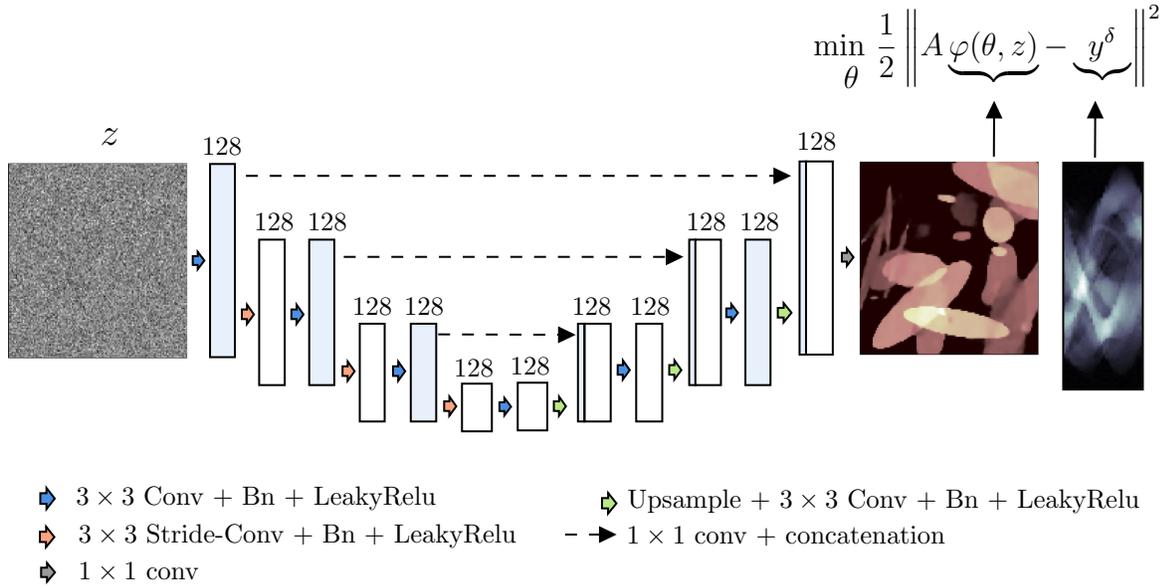}
    \caption{The figure illustrates the DIP approach. A randomly initialized U-Net-like network is fed with fixed Gaussian noise. The weights are optimized by a gradient descent method to minimize the data discrepancy of the output of the network. We use $128$ channels on every layer, and some have the concatenated skip channels additionally. In our case, we always use $4$ or $0$ skip channels.}
    \label{fig:dip-method}
\end{figure}

\begin{figure}[h]
\label{fig:dip}
\centering
\includegraphics[scale=0.62]{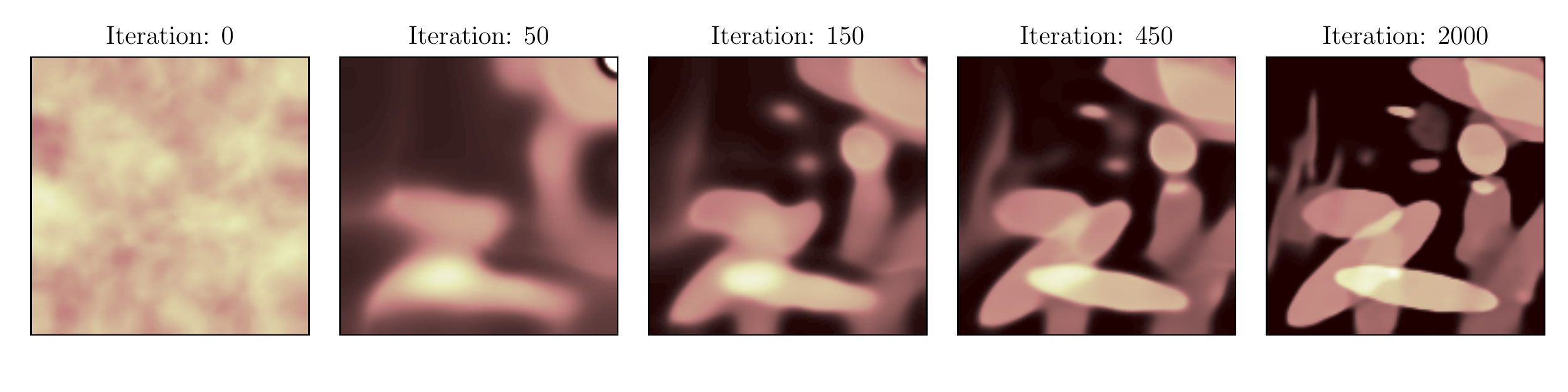}
\caption{Intermediate reconstructions of the DIP approach for CT (Ellipses dataset). At the beginning the coefficients are randomly initialized from a prior distribution. The method starts reconstructing the image from global to local details.}
\label{fig:ellipses_dip_iterations}
\end{figure}

\subsubsection{Related work}
The Deep Image Prior approach has inspired many other researchers to improve it by combining it with other methods \cite{Liu2019, mataev2019deepred, veen2018compressed},  to use it for a wide range of applications \cite{g2018doubledip, gong2018pet, hoyer2019neural, jin2019timedependent} and to offer different perspectives and explanations of why it works \cite{chakrabarty2019spectral, Cheng2019, dittmer2019}. In \cite{mataev2019deepred}, they bring in the concept of Regularization by Denoising (RED). They show how the two (DIP and RED) can be merged into a highly effective unsupervised recovery process. Another series of works, also add explicit priors but on the weights of the network. In \cite{veen2018compressed}, they do it in the form of a multi-variate Gaussian but learn the covariance matrix and the mean using a small dataset. In \cite{Cheng2019}, they introduce a Bayesian perspective on the DIP by also incorporating a prior on the weights $\theta$ and conduct the posterior inference using stochastic gradient Langevin dynamics (SGLD).

So far, the DIP has been used for denoising, inpainting, super-resolution, image decomposition \cite{g2018doubledip}, compressed sensing \cite{veen2018compressed}, PET \cite{gong2018pet}, MRI \cite{jin2019timedependent} among other applications. A similar idea \cite{hoyer2019neural} was also used for structural optimization, which is a popular method for designing objects such as bridge trusses, airplane wings, and optical devices. Rather than directly optimizing densities on a grid, they instead optimize the parameters of a neural network which outputs those densities.

\subsubsection{Network architecture}

In the paper by Ulyanov \etal \cite{ulyanov2018dip}, several architectures were considered, for example, ResNet, Encoder-Decoder (Autoencoder) and a U-Net. For inpainting big holes, the Autoencoder with $\text{depth}=6$ performed best, whereas for denoising a modified U-Net achieved the best results. The regularization happens mainly due to the architecture of the network, which reduces the search space but also influences the optimization process to find \textit{natural} images. Therefore, for each application, it is crucial to choose the appropriate architecture and to tune hyper-parameters, such as the network's depth and the number of channels per layer. Optimizing the hyper-parameters is the most time-consuming part. In Figure~\ref{fig:ellipses_architectures} we show some reconstructions from the Ellipses dataset with different hyper-parameter choices. In this case, it seems that the U-Net without skip connections and depth $5$ (Encoder-Decoder) achieves the best performance. One can see that when the number of channels is too low, the network does not have enough representation power. Also, if there are no skip channels, the higher the number of scales (equivalent to the depth), the more the regularization effect. The extraordinary success of this approach demonstrates that the architecture of the network has a significant influence on the performance of deep learning approaches that use similar kinds of networks.

\begin{figure}[h]
\centering
\includegraphics[scale=0.67]{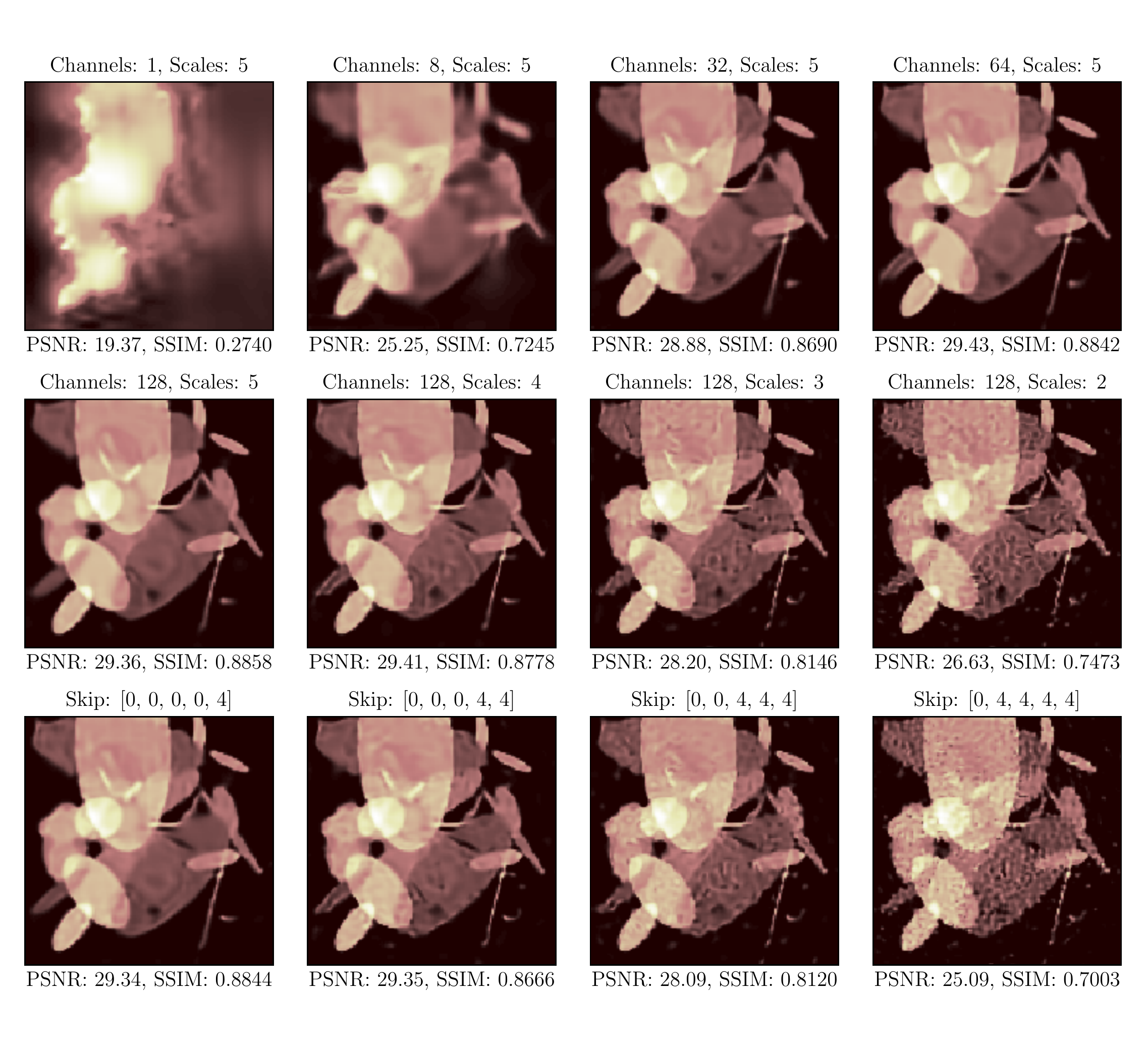}
\caption{CT reconstructions after $5000$ iterations using the DIP with a U-Net architecture and different scales (depths), channels per layer (the network has the same number of channels at every layer) and number of skip connections (the first two rows do not use skip connections, i.e., skip: $[0,\, 0,\, 0,\, 0,\, 0]$). In the last row all reconstructions use $5$ scales and $128$ channels.}
\label{fig:ellipses_architectures}
\end{figure}

\subsubsection{Early-stopping}
As mentioned before, in \cite{ulyanov2018dip}, they show that early stopping has a positive impact on the reconstruction results. They observed (cf. Figure 2) that in some applications, like denoising, the loss decreases fast towards \textit{natural} images, but takes much more time to go towards noisy images. This empirical observation helps to determine when to stop. In Figure \ref{fig:ellipses_dip_losses}, one can observe how the exact error (measured by the PSNR and the SSIM metrics) reaches a maximum and then deteriorates during the optimization process.

\begin{figure}[h]
\centering
\includegraphics[scale=0.685]{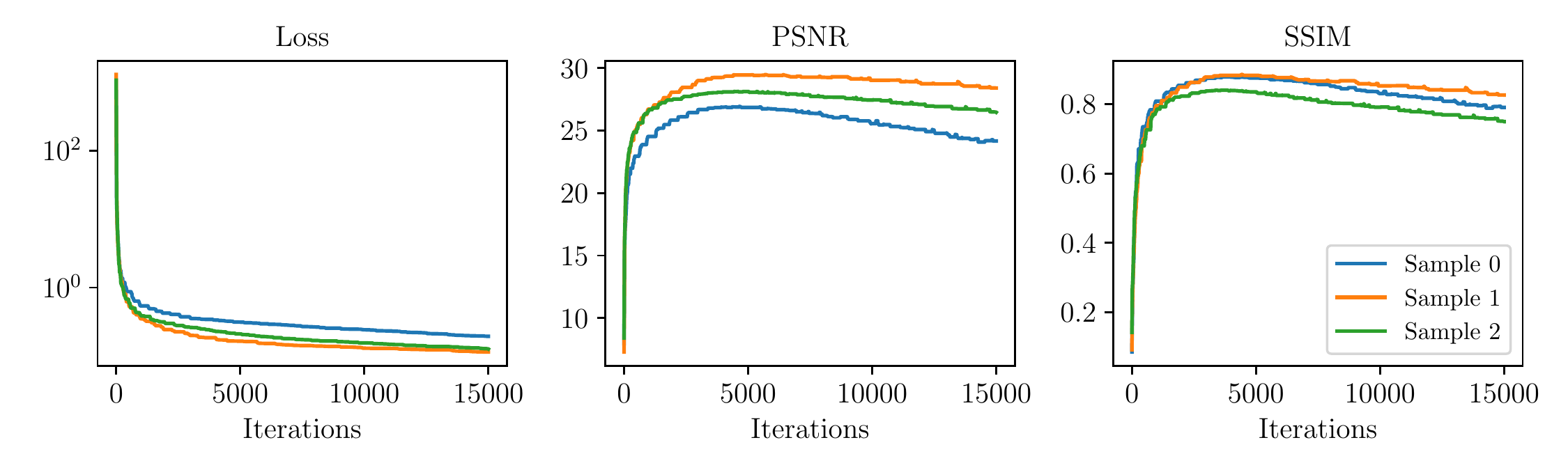}
\caption{Training loss and true error (PSNR and SSIM) of CT reconstructions using the DIP approach. The training was done over 15000 iterations and the architecture is an Encoder-Decoder with $5$ scales and $128$ channels per layer.}
\label{fig:ellipses_dip_losses}
\end{figure}

\section{Deep Image Prior and classical regularization}
\label{sec:regularization}
In this section we analyze the DIP in combination with classical regularization, i.e., we include a regularization term $J:X\to \R \cup \lbrace \infty\rbrace$, such as TV. We give necessary assumptions under which we are able to obtain standard guarantees in inverse problems, such as existence of a solution, convergence, and convergence rates.

In the general case, we consider $X$ and $Y$ to be Banach spaces, and $A:X \to Y$ a continuous linear operator. To simplify notation, we use $\varphi(\cdot)$ instead of $\varphi(\cdot, z)$, since the input to the network is fixed. Additionally, we assume that $\Theta$ is a Banach space, and $\varphi:\Theta \to X$ is a continuous mapping.

The proposed method aims at finding
\begin{equation}
\label{eq:dip_classical}
\theta_\alpha^\delta \in \argmin_{\theta \in \Theta} \disc (A\varphi(\theta), y^\delta) + \alpha J(\varphi(\theta))\ ,
\end{equation}
to obtain
\begin{equation}
\T_\alpha(y^\delta) := \varphi(\theta^\delta_\alpha)\ ,
\end{equation}
for $\alpha > 0$.

With this approach, we get rid of the need for early stopping, i.e., the need to find an optimal number of iterations. Still, we introduce the problem of finding an optimal $\alpha$, which is a classical issue in inverse problems. These problems are similar since both choices depend on the noise level of the observation data. The higher the noise is, the higher the value of $\alpha$ or the smaller the number of iterations for obtaining optimal results.

If the range of $\varphi$ is $\Omega := \mathrm{rg}(\varphi) = X$, i.e., 
\begin{equation}
\forall\ x \in X:\ \exists\ \theta \in \Theta\ s.t\ \varphi(\theta) = x,
\end{equation}
this is equivalent to the standard variational approach in \Eref{eq:variational}. However, although the network can fit some noise, it cannot fit, in general, any arbitrary $x \in X$. This depends on the chosen architecture, and it is mainly because we do not use any fully connected layers. Nevertheless, the minimization in (\ref{eq:dip_classical}) is similar to the setting in \Eref{eq:variational}, if we restrict the domain of $A$ to be $\widetilde{\D}(A) := \D(A) \cap \Omega$. I.e.,

\begin{equation}
\T_\alpha(y^\delta) \in \arg\min_{x\in \widetilde{\D}} \disc(Ax,\, y^\delta) + \alpha J(x),
\end{equation}
where $\widetilde{\D} := \widetilde{\D}(A)\cap \D(J)$. If the following assumptions are satisfied, then all the classical theorems, namely well-posedness, stability, convergence, and convergence rates, still hold, cf.  \cite{Hofmann2007}.

\begin{assumption}
\label{as:assumption1}
The range of $\varphi$, namely $\Omega$, is closed, i.e., if there is a convergent sequence $\{x_k\} \subset \Omega$ with limit $\tilde{x}$, it holds $\tilde{x} \in \Omega$.
\end{assumption}

\begin{definition} An element $x^\dagger \in \widetilde{\D}$ is called a $J$-minimizing solution if $Ax^\dagger=y^\dagger$ and $\forall\, x \in \widetilde{\D}:\, J(x^\dagger) \leq J(x)$, where $y^\dagger$ is the perfect noiseless data.
\end{definition}

\begin{assumption}
\label{as:assumption2}
There exists a $J$-minimizing solution $x^\dagger \in \widetilde{\D}$ and $J(x^\dagger) < \infty$.
\end{assumption}

Assumption~\ref{as:assumption1} guarantees that the restricted domain of $A$ is closed, whereas Assumption~\ref{as:assumption2} guarantees that there is a $J$-minimizing solution in the restricted domain.

\bigskip

The mapping $\varphi: \Theta \to X$, has a neural network structure, with a fixed input $z \in \R^{n_0}$, and can be expressed as a composition of affine mappings and activation functions
\begin{equation}
\varphi = \sigma_L \circ \K_L \circ \cdots \circ \sigma_2 \circ \K_2 \circ \sigma_1 \circ \K_1\ ,
\end{equation}
where $\K_i(x) := \Gamma_i x + b_i$, $\Gamma_i \in G_i \subseteq \R^{n_i \times n_{i-1} }$, $b_i \in B_i \subseteq \R^{n_i}$, $\theta = (\Gamma_L,\, b_L,\, \cdots,\, \Gamma_1,\, b_1) \in G_L \times B_L \cdots \times G_1 \times B_1 = \Theta$ and $\sigma_i:\R^{n_i}\to\R^{n_i}$. In the following we analyze under which conditions we can guarantee that the range of $\varphi$ (with respect to $\Theta$) is closed.

\begin{definition}
An activation function $\sigma: \R^n \to \R^n$ is valid, if it is continuous, monotone, and bounded, i.e., there exist $c>0$ such that $\forall x\in X: \nsqnorm{\sigma(x)} \leq c\nsqnorm{x}$.
\end{definition}

\begin{lemma}
\label{le:lemma1}
Let $\varphi$ be a neural network $\varphi:\Theta \to X$ with $L$ layers. If $\Theta$ is a compact set, and the activation functions $\sigma_i$ are \textit{valid}, then the range of $\varphi$ is closed.
\end{lemma}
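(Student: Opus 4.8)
The plan is to exploit the compactness of $\Theta$ together with the continuity of $\varphi$: the continuous image of a (sequentially) compact set is itself (sequentially) compact, and in the Banach space $X$ this forces the range $\Omega=\rg(\varphi)$ to be closed in exactly the sense demanded by Assumption~\ref{as:assumption1}.

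First I would verify that $\varphi:\Theta\to X$ is continuous as a function of the parameters $\theta=(\Gamma_L,b_L,\dots,\Gamma_1,b_1)$, the network input $z$ being held fixed. Each layer $\K_i(u)=\Gamma_i u+b_i$ is jointly continuous in $(\Gamma_i,b_i,u)$, since matrix–vector multiplication and vector addition are continuous on the finite-dimensional spaces $\R^{n_i\times n_{i-1}}$, $\R^{n_i}$ and $\R^{n_{i-1}}$; and because the activations $\sigma_i$ are \emph{valid} they are, in particular, continuous. Proceeding by induction on the layer index — starting from $\theta\mapsto\K_1(z)=\Gamma_1 z+b_1$, composing with $\sigma_1$, feeding the result into $\K_2$, and so on — the map $\varphi$ is a finite composition of continuous maps and is therefore continuous.

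Next I would establish closedness directly from the sequential formulation in Assumption~\ref{as:assumption1}. Let $\{x_k\}\subset\Omega$ converge to some $\tilde{x}\in X$, and choose $\theta_k\in\Theta$ with $\varphi(\theta_k)=x_k$. Since $\Theta$ is compact it is sequentially compact, so there is a subsequence $\theta_{k_j}\to\theta^{\ast}\in\Theta$. Continuity of $\varphi$ gives $\varphi(\theta_{k_j})\to\varphi(\theta^{\ast})$, while by construction $\varphi(\theta_{k_j})=x_{k_j}\to\tilde{x}$; by uniqueness of limits in the Hausdorff space $X$ we conclude $\tilde{x}=\varphi(\theta^{\ast})\in\Omega$. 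Hence every limit of a convergent sequence in $\Omega$ lies in $\Omega$, i.e.\ $\Omega$ is closed.

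The only genuine obstacle is the continuity verification, and specifically the bookkeeping required to propagate the joint continuity of the parameterized affine maps through the $L$-fold composition; the remainder is a routine compactness argument. I would also remark that, for this particular statement, the monotonicity and boundedness included in the definition of a valid activation are not actually invoked — continuity of the $\sigma_i$ together with the compactness of $\Theta$ already suffices — so those additional properties are presumably needed for the well-posedness, stability, and convergence results referenced afterwards rather than for the closedness of $\rg(\varphi)$.
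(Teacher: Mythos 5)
Your proof is correct and rests on the same two pillars as the paper's own argument --- compactness of $\Theta$ and continuity of the parameter-to-output map --- so that $\rg(\varphi)$ is the continuous image of a compact set, hence compact and closed; the paper merely organizes this as a layer-by-layer induction (extracting convergent subsequences of the weights $\Gamma^{(k)}, b^{(k)}$ at each affine stage and invoking continuity of each $\sigma_i$), whereas you prove continuity of the full composition once and extract a single convergent subsequence in $\Theta$. Your closing remark is also accurate: only the continuity clause of ``valid'' is used here, and indeed the paper's step (iii) invokes nothing beyond continuity, so the monotonicity and the bound $\Vert\sigma(x)\Vert \leq c\Vert x\Vert$ play no role in establishing closedness of the range.
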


\begin{proof}
In order to prove the result we show that the range after each layer of the network is compact. \\
i) Let the set $V_i = \{\Gamma u:\ \Gamma \in G_i,\, u \in U_i \subset \R^{n_{i-1}}\}$, where $U_i$ is bounded and closed. From the compactness of $\Theta$ it follows that $G_i$ is also bounded and closed, therefore, $V_i$ is also bounded. Let the sequence $\{\Gamma^{(k)}u^{(k)}\}$, with $\Gamma^{(k)} \in G_i$ and $u^{(k)} \in U_i$, converge to $v$. Since $\{\Gamma^{(k)}\}$ and $\{u^{(k)}\}$ are bounded, there is a subsequence $\{\overline{\Gamma}^{(k)}\bar{u}^{(k)}\}$, where both $\{\overline{\Gamma}^{(k)}\}$ and $\{\bar{u}^{(k)}\}$ converge to $\overline{\Gamma} \in G_i$ and $\bar{u}\in U_i$ respectively. It follows that $\{\overline{\Gamma}^{(k)}\bar{u}^{(k)}\}$ converges to $\overline{\Gamma}\bar{u}$, therefore, $v= \overline{\Gamma}\bar{u} \in V_i$, which shows that $V_i$ is closed.\\
ii) From i) and the fact that $B_i$ is also compact it follows that the set $V_i = \{\Gamma u + b:\ \Gamma \in G_i \subset \R^{n_i \times n_{i-1}},\, u \in U_i \subset \R^{n_{i-1}}, b \in B_i \subset \R^{n_i}\}$ is still closed and bounded.\\
iii) It is easy to show that if the pre-image of a \textit{valid} activation $\sigma$ is compact, then its image is also compact.\\
In the first layer, $V_0 = \{z\}$; thus, it can be shown by induction that the range of $\varphi:\Theta \to X$ is closed.
\end{proof}

All activation functions commonly used in the literature, for example, sigmoid, hyperbolic tangent, and piece-wise linear activations, are \textit{valid}. The bounds on the weights of the network can be ensured by clipping the weights after each gradient update. In our implementation of the DIP approach, we use a sufficiently large bound and empirically check that Assumption~\ref{as:assumption2} holds.

\begin{remark}
An alternative condition to the bound on the weights is to use only \textit{valid} activation functions with closed range, for example, ReLU or leaky ReLU. However, it wouldn't be possible to use sigmoid or hyperbolic tangent. In our experiments we observed that having a sigmoid activation in the last layer performs better than having a ReLU.
\end{remark}

\section{Deep Image Prior with initial reconstruction}
\label{sec:dip_initial}

In this section, we propose a new method based on the DIP approach. It takes the result from any end-to-end learned method $\T: Y \to X$ as initial reconstruction and further enforces data consistency by optimizing over its deep-neural parameterization.

\begin{definition}[Deep-neural parameterization]
Given an untrained network $\varphi: \Theta \times Z \to X$ and a fixed input $z \in Z$, the deep-neural parameterization of an element $x \in X$ with respect to $\varphi$ and $z$ is 
\begin{equation}
\theta_x \in \argmin_{\theta \in \Theta} \sqnorm{\varphi(\theta,\, z) - x}\ .
\end{equation}
\end{definition}

The projection onto the range of the network is possible because of the result of Lema~\ref{le:lemma1}, i.e., the range is closed. If $\varphi$ is a deep convolutional network, for example, a U-Net, the deep-neural parameterization has similarities with other signal representations, such as the Wavelets and Fourier transforms \cite{hoyer2019neural}. For image processing, such domains are usually more convenient than the classical pixel representation.

As shown in Figure~\ref{fig:space}, one way to enforce data consistency is to project the initial reconstruction into the set where $\nsqnorm{Ax - y^\delta} \leq \delta$. The puzzle is that due to the ill-posedness of the problem, the new solution (red point) will very likely have artifacts. The proposed approach first obtains the deep-neural parameterization $\theta_0$ of the initial reconstruction $\T(y^\delta)$ and then use it as starting point to minimize
\begin{equation}
\label{eq:diptv_loss}
\loss(\theta) := \sqnorm{A\varphi(\theta, z) - y^\delta} + \alpha J(\varphi(\theta, z)),
\end{equation}
over $\theta$ via gradient descent. The iterative process is conveyed until $\nsqnorm{A\varphi(\theta, z) - y^\delta} \leq \delta$ or for a given fixed number of iterations $K$ determined by means of a validation dataset. This approach seems to force the reconstruction to stay close to the set of \textit{natural} images because of the structural bias of the deep-neural parameterization. The procedure is listed in Algorithm~\ref{alg:new_method} and a graphical representation is shown in Figure~\ref{fig:space}. 

The new method $\hat{\T}: Y \to X$ is similar to other image enhancement approaches. For example, related methods \cite{Donoho1994}, first compute the wavelet transform (parameterization), and then repeatedly do smoothing or shrinking of the coefficients (further optimization).

\begin{figure}
    \centering
    \includegraphics{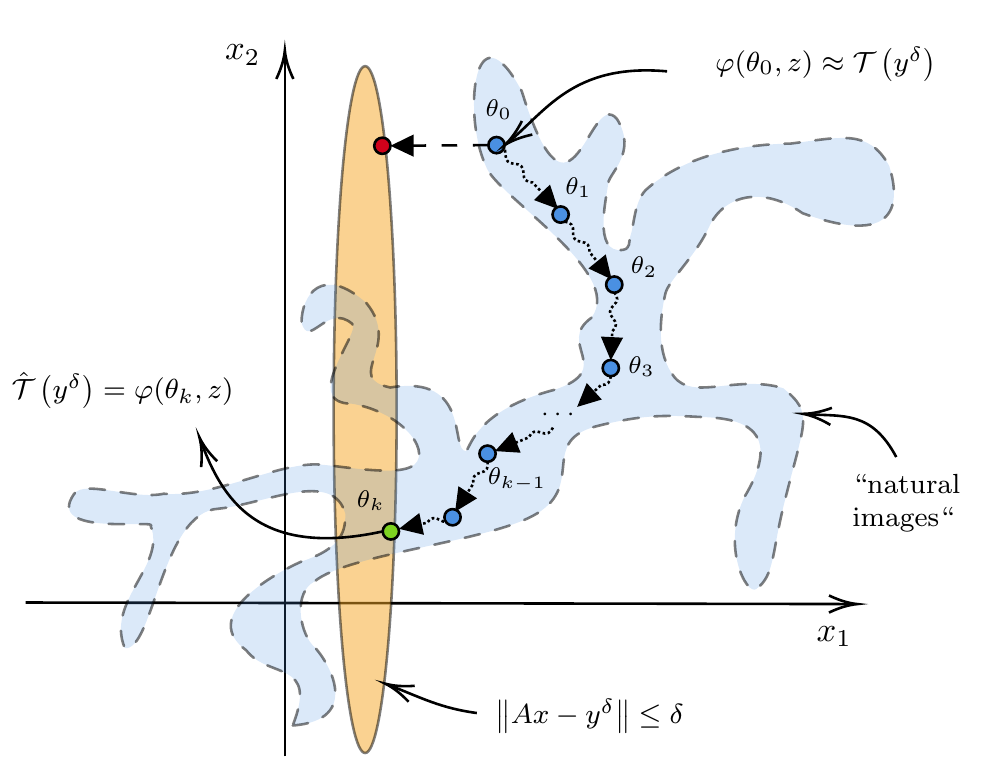}
    \caption{Graphical illustration of the DIP approach with initial reconstruction. The blue area refers to an approximation of some part of the space of \textit{natural} images.}
    \label{fig:space}
\end{figure}

\begin{algorithm}[t]
\caption{Deep Image Prior with initial reconstruction}
\begin{algorithmic}[1]
\setlength{\baselineskip}{1.5\baselineskip}
\State {${\displaystyle x_0 \gets \T(y^\delta)}$}
\State {${\displaystyle z \gets \mathrm{noise}}$}
\State {${\displaystyle \theta_0 \in \argmin_\theta \sqnorm{\varphi(\theta, z) - x_0}}$}
\For{$k \gets 0$ to $K-1$}
    \State {${\displaystyle \omega \in \partial \loss(\theta_k)}$}
    \State {${\displaystyle \theta_{k+1} \gets \theta_k - \eta \omega}$}
\EndFor
\State {${\displaystyle\hat{\T}(y^\delta)} \gets \varphi(\theta_k, z)$}
 
\end{algorithmic}
\label{alg:new_method}
\end{algorithm}

\section{Benchmark setup and results}
\label{sec:benchmark}

For the benchmark, we implemented the end-to-end learned methods described in Section~\ref{sec:learned_methods}. We trained them on different data-sizes and compared them with classical methods, such as FBP and TV regularization, and with the proposed methods. The datasets we use were recently released to benchmark deep learning methods for CT reconstruction \cite{leuschner2019lodopabct}. They are accessible through the DIV$\alpha\ell$ python library \cite{leuschner2019dival}. We also provide the code and the trained methods in the following GitHub repository: \url{https://github.com/oterobaguer/dip-ct-benchmark}.

\subsection{The LoDoPaB-CT Dataset}

The low-dose parallel beam (LoDoPaB) CT dataset \cite{leuschner2019lodopabct} consists of more than \num{40000} two-dimensional CT images and corresponding simulated low-intensity measurements.
Human chest CT reconstructions from the LIDC/IDRI database \cite{armato2011lidc_idri} 
are used as virtual ground truth.
Each image has a resolution of $\num{362}\times\num{362}$ pixels.
For the simulation setup, a simple parallel beam geometry with \num{1000} angles and \num{513} projection beams is used.
To simulate low intensity, Poisson noise corresponding to a mean photon count of \num{4096} photons per detector pixel before attenuation is applied to the projection data.
We use the standard dataset split defining in total \num{35820} training pairs, \num{3522} validation pairs and \num{3553} test pairs.

\subsection{Ellipses Dataset}

As a synthetic dataset for imaging problems, random phantoms of combined ellipses are commonly used.
We use the \texttt{'ellipses'} standard dataset from the DIV$\alpha\ell$ python library (as provided in version 0.4) \cite{leuschner2019dival}.
The images have a resolution of $\num{128}\times\num{128}$ pixels.
Measurements are simulated with a parallel beam geometry with only \num{30} angles and \num{183} projection beams.
In addition to the sparse-angle setup, moderate Gaussian noise with a standard deviation of \SI{2.5}{\%} of the mean absolute value of the projection data is added to the projection data.
In total, the training set contains \num{32000} pairs, while the validation and test set consist of \num{3200} pairs each.

\subsection{Implementation details}

For the DIP with initial reconstruction, we used the learned primal-dual, which we consider to be state of the art for this task (see the results in Figure~\ref{fig:ellipses_performace}). For each data-size, we chose different hyper-parameters, namely the step-size $\eta$, the TV regularization parameter $\gamma$, and the number of iterations $K$, based on the available validation dataset ($3$ data-pairs for the smallest size).

Minimizing $\loss(\theta)$ in \eref{eq:diptv_loss} is not trivial because TV is not differentiable. In our implementation we use the PyTorch automatic differentiation framework \cite{paszke2017automatic} and the ADAM \cite{kingma2014adam} optimizer. For the Ellipses dataset we use the $\ell_2$-discrepancy term, whereas for the LoDoPaB we use the Poisson loss.

\subsection{Numerical results}
\label{sec:benchmark_results}

We trained all the methods with different dataset sizes. For example, $\SI{0.1}{\%}$ on the Ellipses dataset means we trained the model with $\SI{0.1}{\%}$ ($32$ data-pairs) of the available training data and $\SI{0.1}{\%}$ ($3$ data-pairs) of the validation data. Afterward, we tested the performance of the method on $100$ samples of the test dataset. More details are depicted in \ref{ap:details}.

As expected, on both datasets, the fully learned method (iRadonMap) requires much data to achieve acceptable performance. On the Ellipses dataset, it outperformed TV using $\SI{100}{\%}$ of the data, whereas on the LoDoPaB dataset, it performed just slightly better than the FBP. The learned post-processing (FBP+UNet) required much less data. It outperformed TV with only $\SI{10}{\%}$ of the Ellipses dataset and $\SI{0.1}{\%}$ of the LoDoPaB dataset. On the other hand, we find that the learned primal-dual is very data efficient and achieved the best performance. On both datasets, it outperformed TV, trained with only $\SI{0.1}{\%}$ (32 data-pairs) and $\SI{0.01}{\%}$ (4 data-pairs from the same patient) of the Ellipses and LoDoPaB datasets respectively. In Figure~\ref{fig:lodopab_data}, we show some results from the test set.

The DIP+TV approach achieved the best results among the data-free methods. On average, it outperforms TV by $1\,\si{dB}$, and $2\,\si{dB}$ on the Ellipses and LoDoPaB datasets respectively. In Figure~\ref{fig:ellipses_dip_vs_diptv}, it can be observed that TV tends to produce flat regions but also produces high staircase effects on the edges. However, the combination with DIP seems to produce more realistic edges. For the first two smaller data-sizes, it performs better than all the end-to-end learned methods.

\begin{figure}[h]
\centering
\includegraphics[scale=0.68]{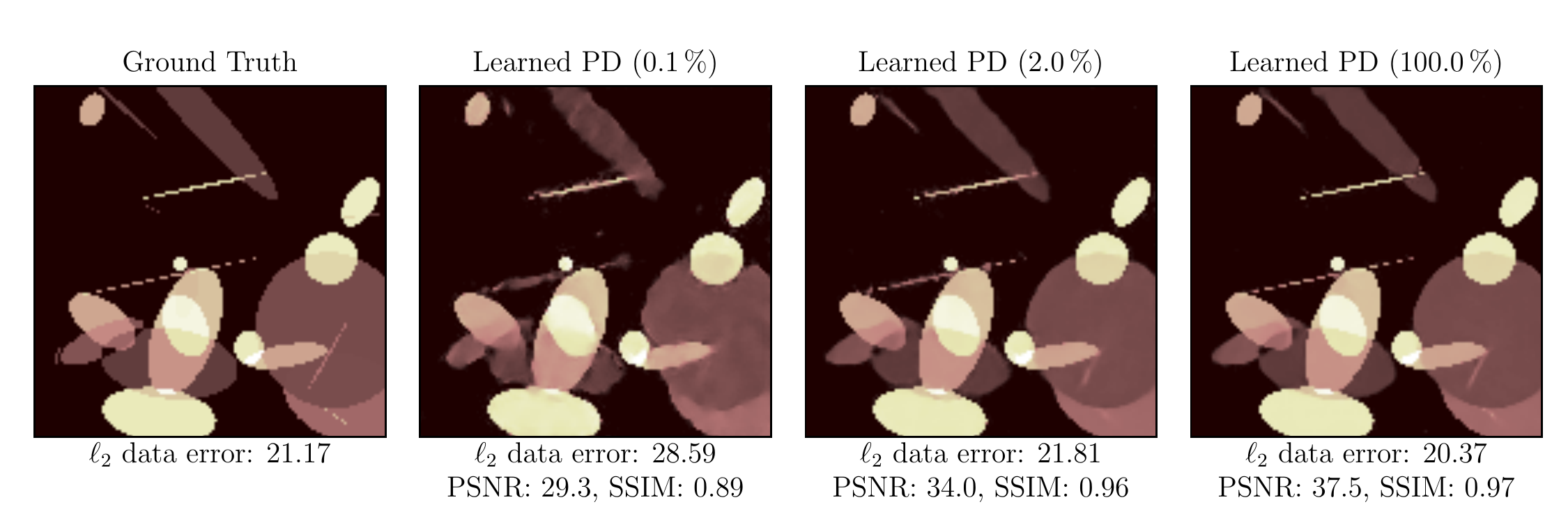}
\includegraphics[scale=0.68]{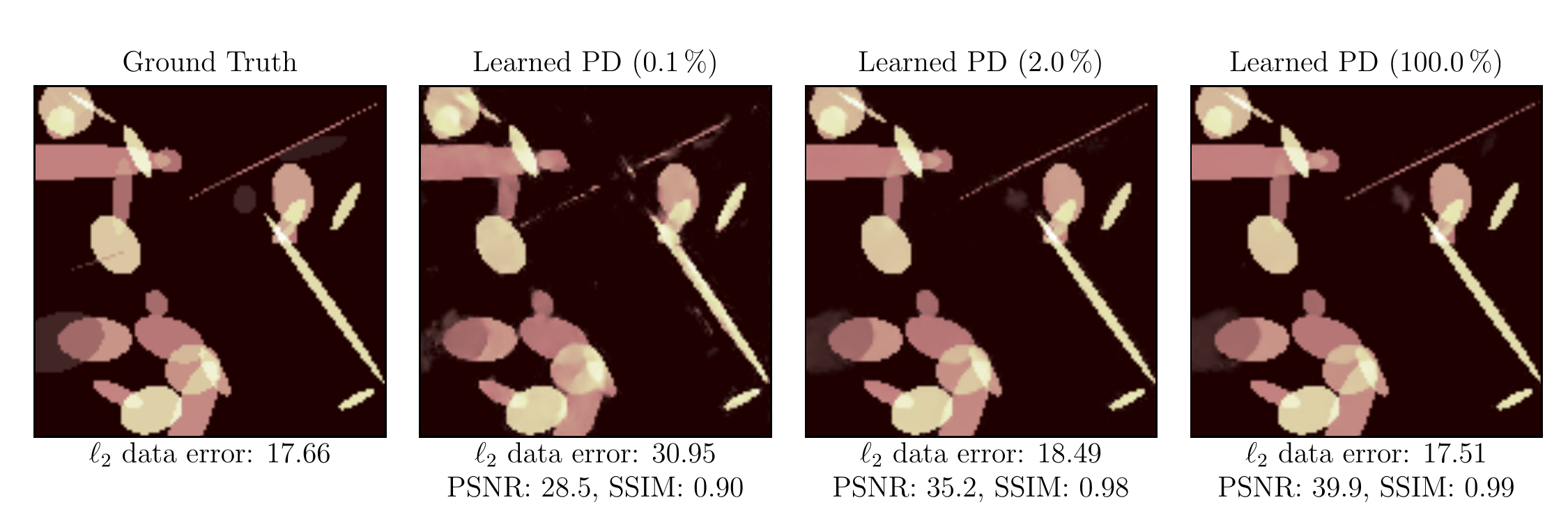}
\includegraphics[scale=0.9]{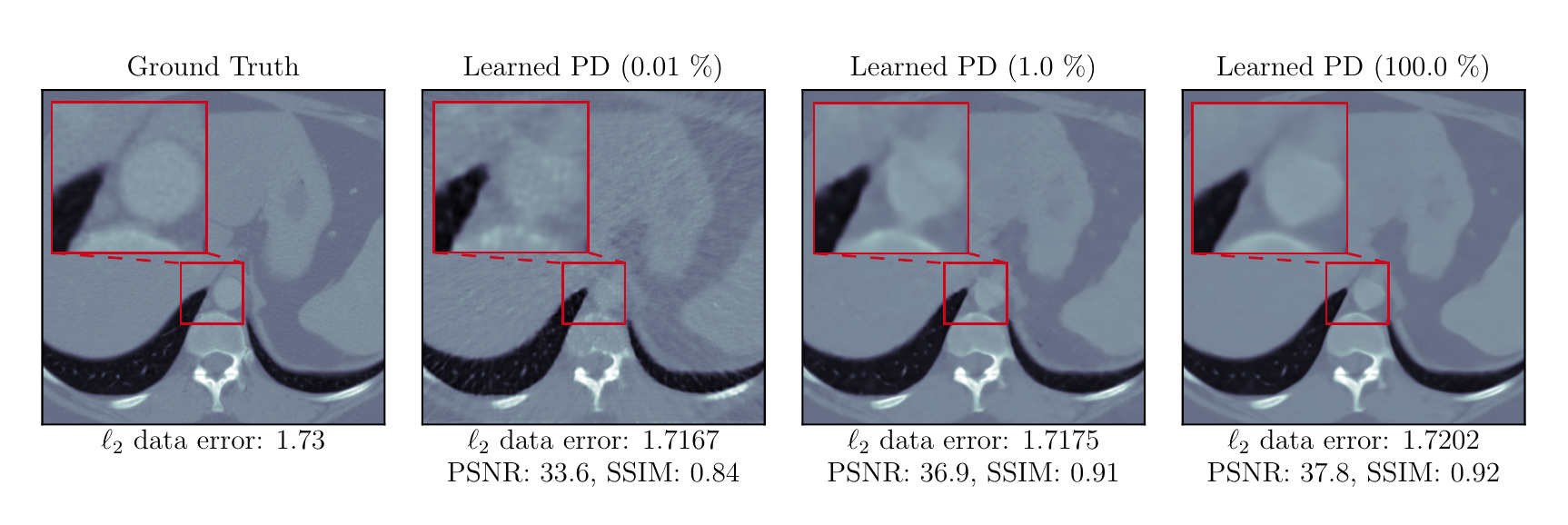}
\includegraphics[scale=0.9]{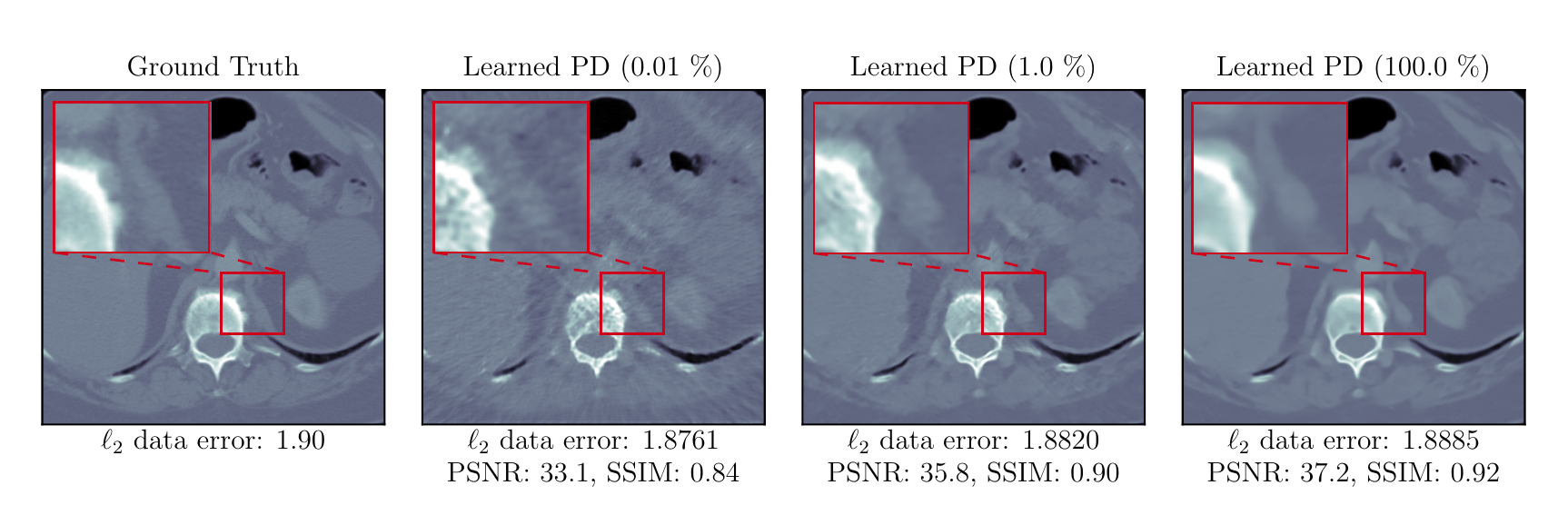}
\caption{Reconstructions using the learned primal-dual method trained with different amounts of data.}
\label{fig:lodopab_data}
\end{figure}

\begin{figure}
\centering
\includegraphics[scale=0.685]{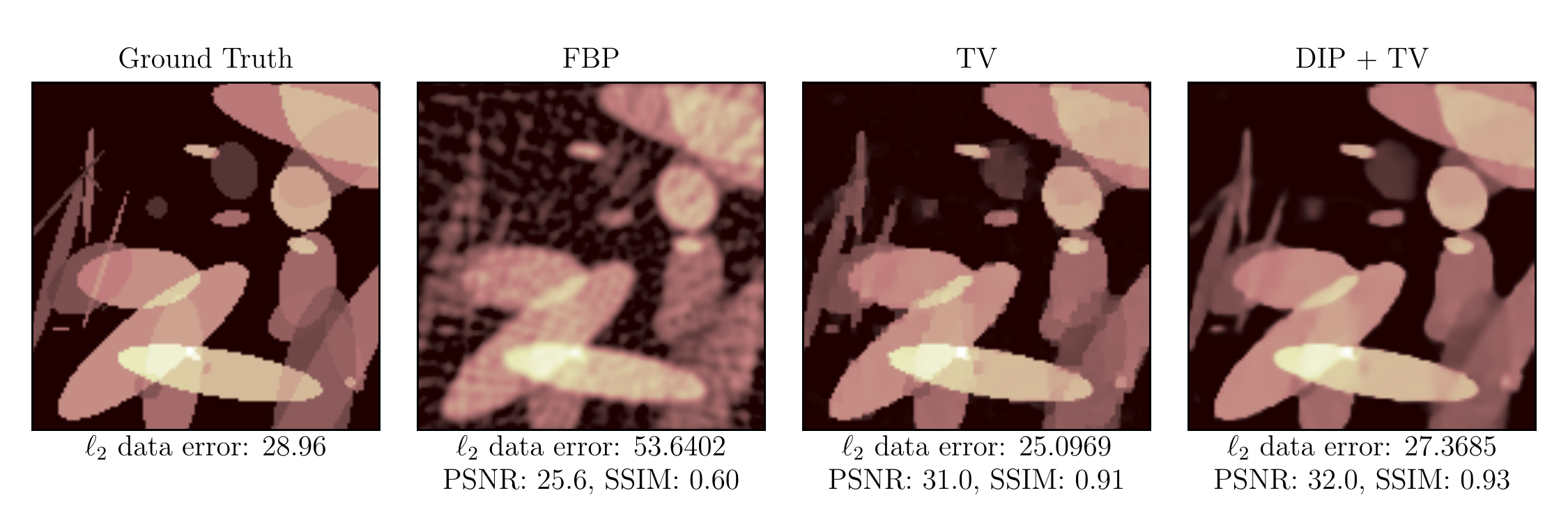}
\includegraphics[scale=0.685]{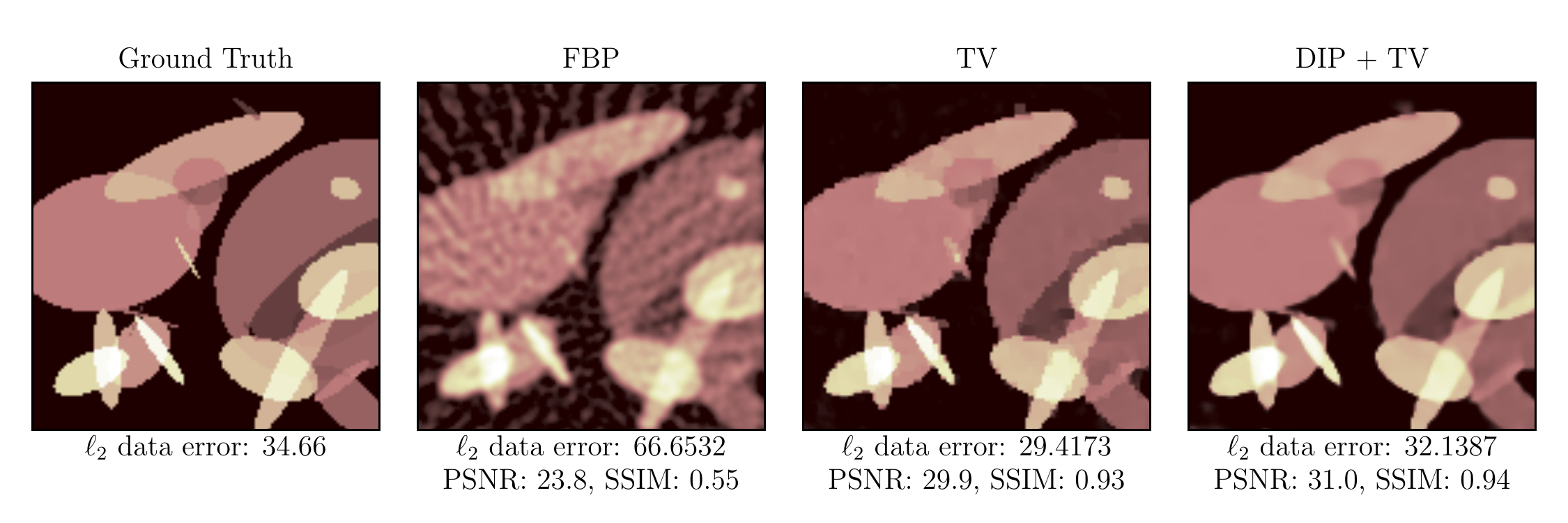}
\includegraphics[scale=0.9]{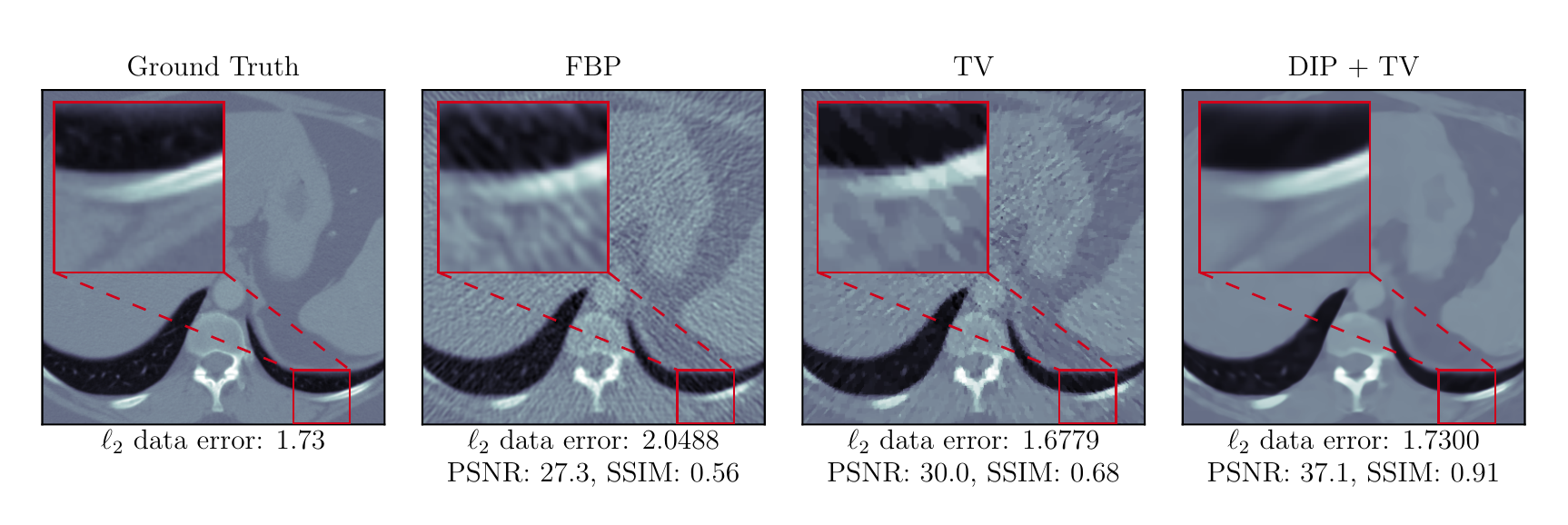}
\includegraphics[scale=0.9]{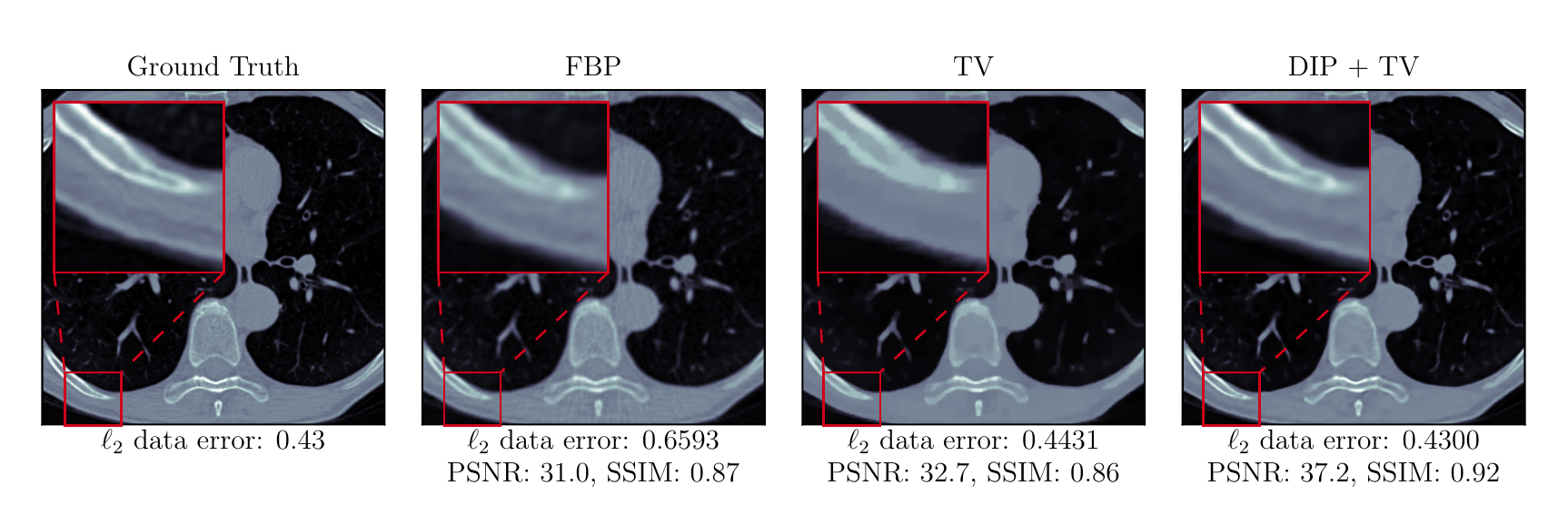}
\caption{Reconstruction obtained with the Filtered Back Projection (FBP) method, isotropic TV regularization and the Deep Image Prior (DIP) approach combined with TV.}
\label{fig:ellipses_dip_vs_diptv}
\end{figure}

The Deep Image Prior in combination with the learned primal-dual achieved the best results on the low-data regime. For the Ellipses dataset, it improved the quality of the reconstructions up to $\SI{+1}{dB}$ on average. However, for dataset sizes bigger than $\SI{2}{\%}$, the method did not yield any significant change. On the LoDoPaB data, we did not find a notable improvement. For the smaller sizes, it did improve, but it was just as good as the DIP+TV approach. We believe that this approach is more useful in the case of having sparse measurements, as in the Ellipses dataset.

In Figure~\ref{fig:ellipses_fastdip_example}, we show some reconstructions obtained using this method for the Ellipses dataset, and compare them with the original initial reconstructions. The reconstructions have a better data consistency w.r.t the observed data ($\ell_2$-discrepancy) and higher quality both visually and in terms of the PSNR and SSIM measures. Moreover, it needed fewer iterations than the DIP+TV, even if we also consider the iterations required to obtain the deep-prior/neural parameterization of the first reconstruction. These initial iterations are much faster because they only use the identity operator instead of the Radon transform.

In our setting, for the Ellipses dataset, the DIP+TV approach needs $8000$ iterations to obtain optimal performance in a validation dataset ($5$ ground truth and observation pairs). On the other hand, by using the initial reconstruction, it needs $4000$ iterations with the identity operator and only $1000$ with the Radon transform operator. With an nVidia
GeForce GTX 1080 Ti graphics card, the original DIP takes approx. $6$ min per reconstruction, whereas the proposed method takes $3$ min ($2\times$ speed factor). The used Encoder-Decoder architecture has approx. $2\cdot 10^6$ parameters in total.

\begin{figure}[!tbp]
  \includegraphics[width=\textwidth]{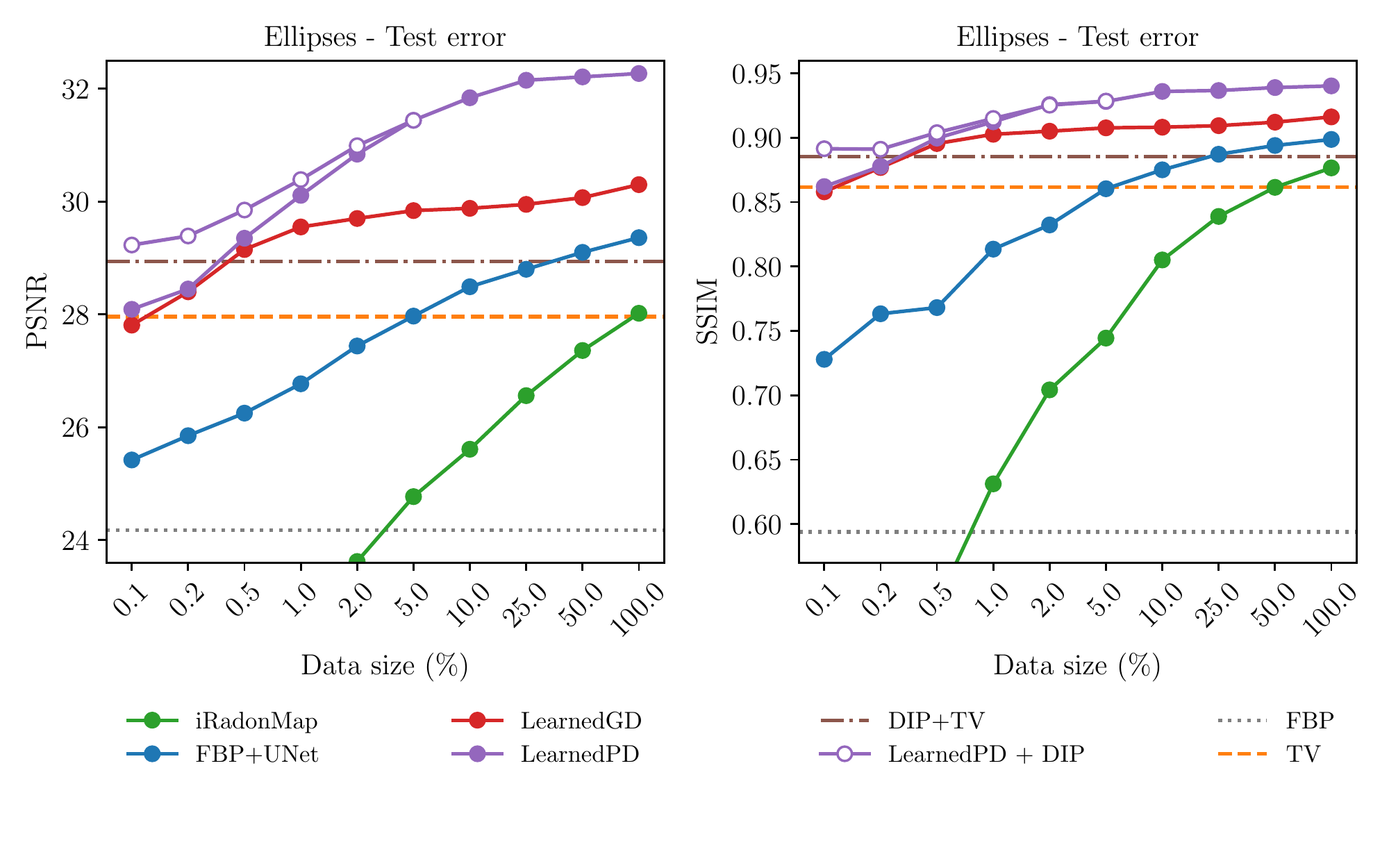}
  \includegraphics[width=\textwidth]{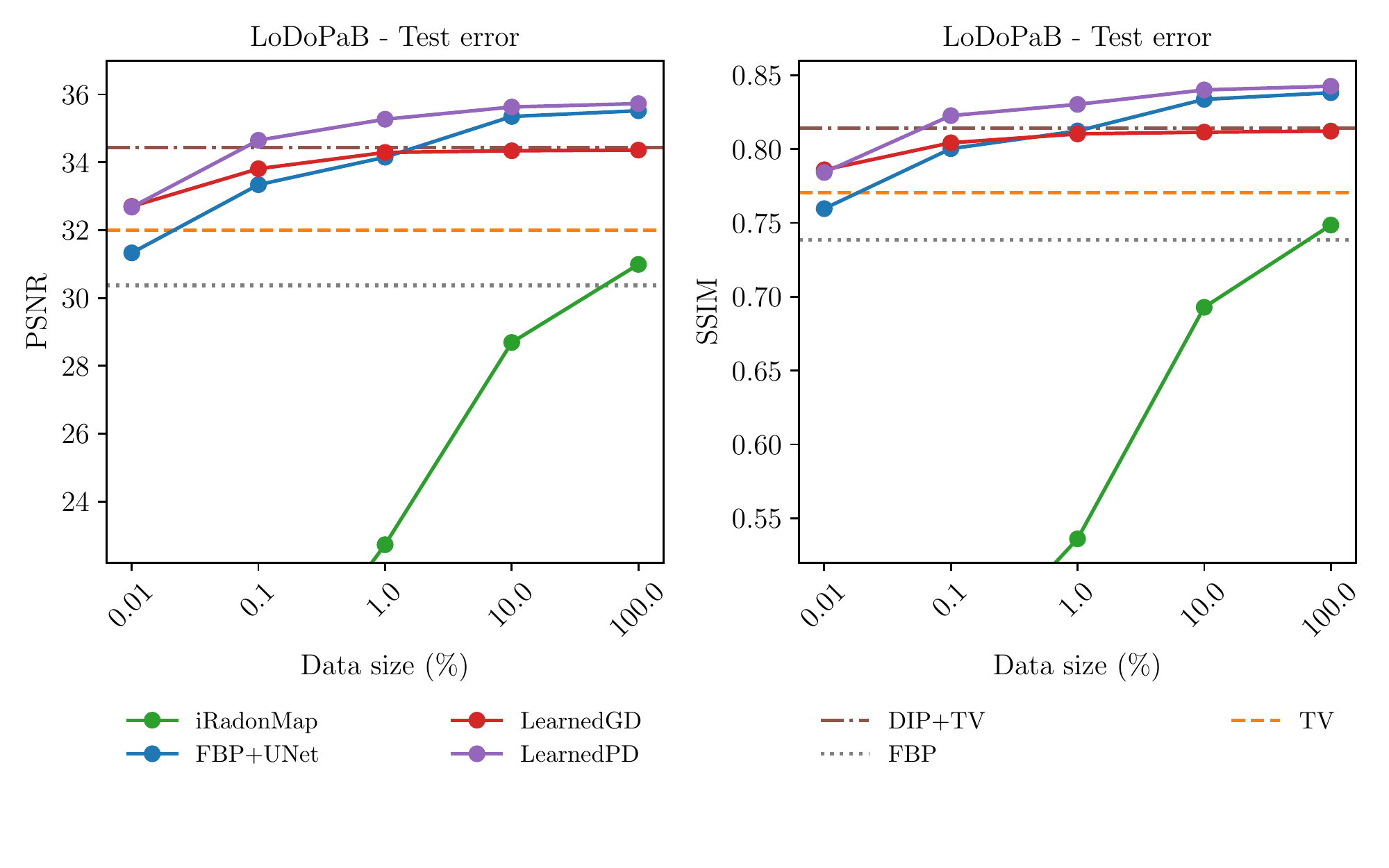}
  \caption{Benchmark results of the compared classical methods (Filtered Back Projection, TV), learned methods (FBP+UNet, iRadonMap, learned gradient descent, learned primal-dual) and the proposed approaches (DIP+TV, learned primal-dual + DIP) on the Ellipses and LoDoPaB standard datasets. The horizontal lines indicate the performance of the data-free methods.}
  \label{fig:ellipses_performace}
\end{figure}

\begin{figure}
\centering
\includegraphics[scale=0.685]{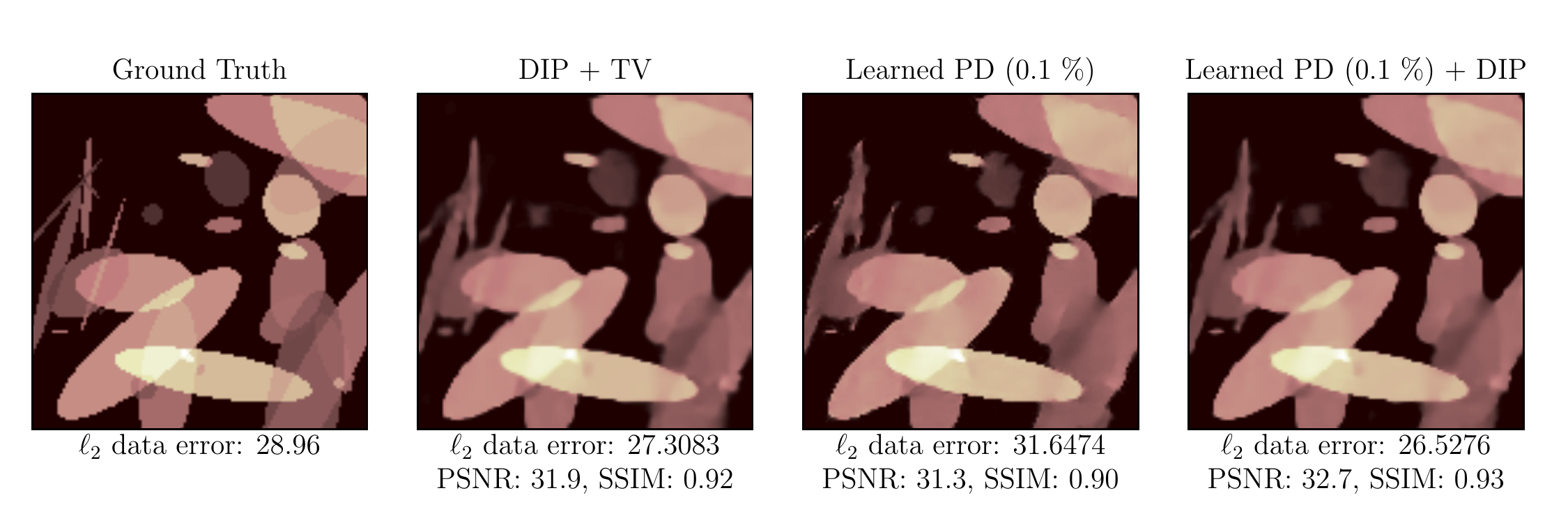}
\includegraphics[scale=0.685]{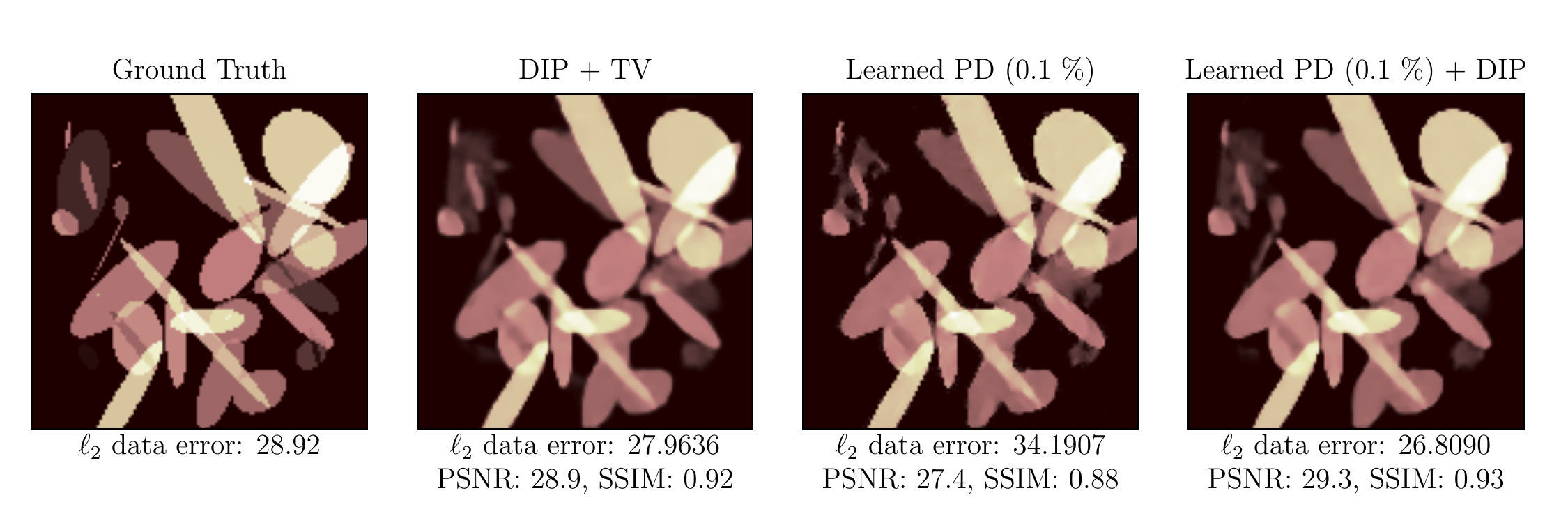}
\includegraphics[scale=0.685]{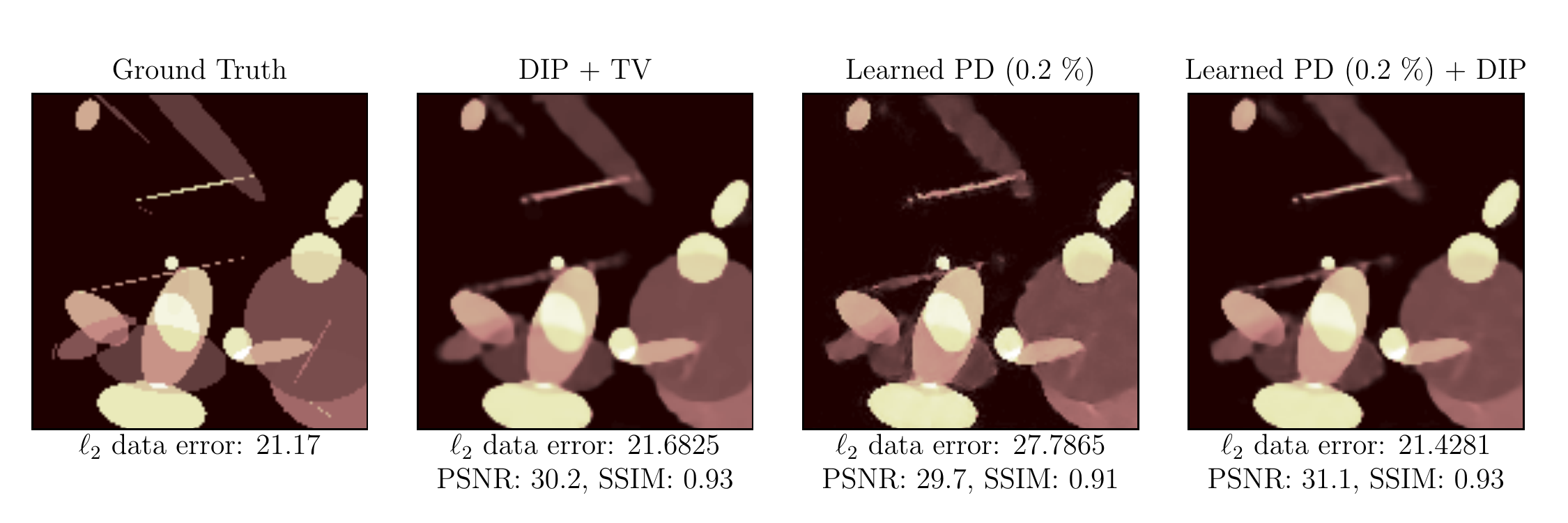}
\includegraphics[scale=0.685]{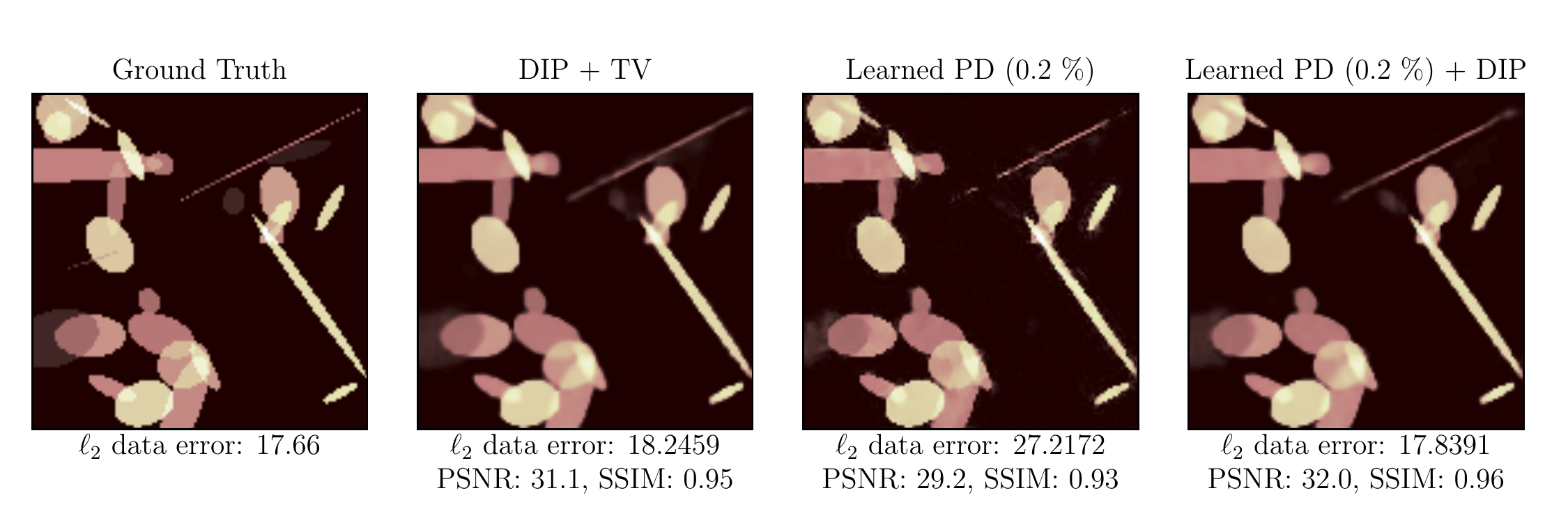}
\caption{Examples of reconstructions obtained with the filtered back projection (FBP), the learned primal-dual method trained with $\SI{0.1}{\%}$ and $\SI{0.2}{\%}$ of the Ellipses dataset ($32$ and $64$ resp. data-pairs) and the DIP approach with initial reconstruction.}
\label{fig:ellipses_fastdip_example}
\end{figure}

\section{Conclusions}

In this work, we study the combination of classical regularization, deep-neural parameterization, and deep learning approaches for CT reconstruction. We benchmark the investigated methods and evaluate how they behave in low-data regimes. Among the data-free approaches, the DIP+TV method achieves the best results. However, it is considerably slow and does not benefit from having a small dataset. On the other hand, the learned primal-dual is very data efficient. Still, it lacks data consistency when not trained with enough data. These issues motivate us to adjust the reconstruction obtained with the learned primal-dual to match the observed data. We solved the puzzle without introducing artifacts through a combination of classical regularization and the DIP. We also derived conditions under which theoretical guarantees hold and showed how to obtain them.

The results presented in this paper offer several baselines for future comparisons with other approaches. Moreover, the proposed methods could be applied to other imaging modalities.

\ack
The authors acknowledge the support by the Deutsche Forschungsgemeinschaft (DFG) within the framework of GRK 2224/1 ``$\pi^3$: Parameter Identification -- Analysis, Algorithms, Applications''. The authors also thank Jonas Adler, Jens Behrmann, S{\"o}ren Dittmer and Peter Maass for useful comments and discussions.

\pagebreak

\section*{References}
\bibliographystyle{plain}
\bibliography{main}

\pagebreak

\appendix

\section{More results}
\begin{figure}[ht]
\centering
\includegraphics[width=0.95\textwidth]{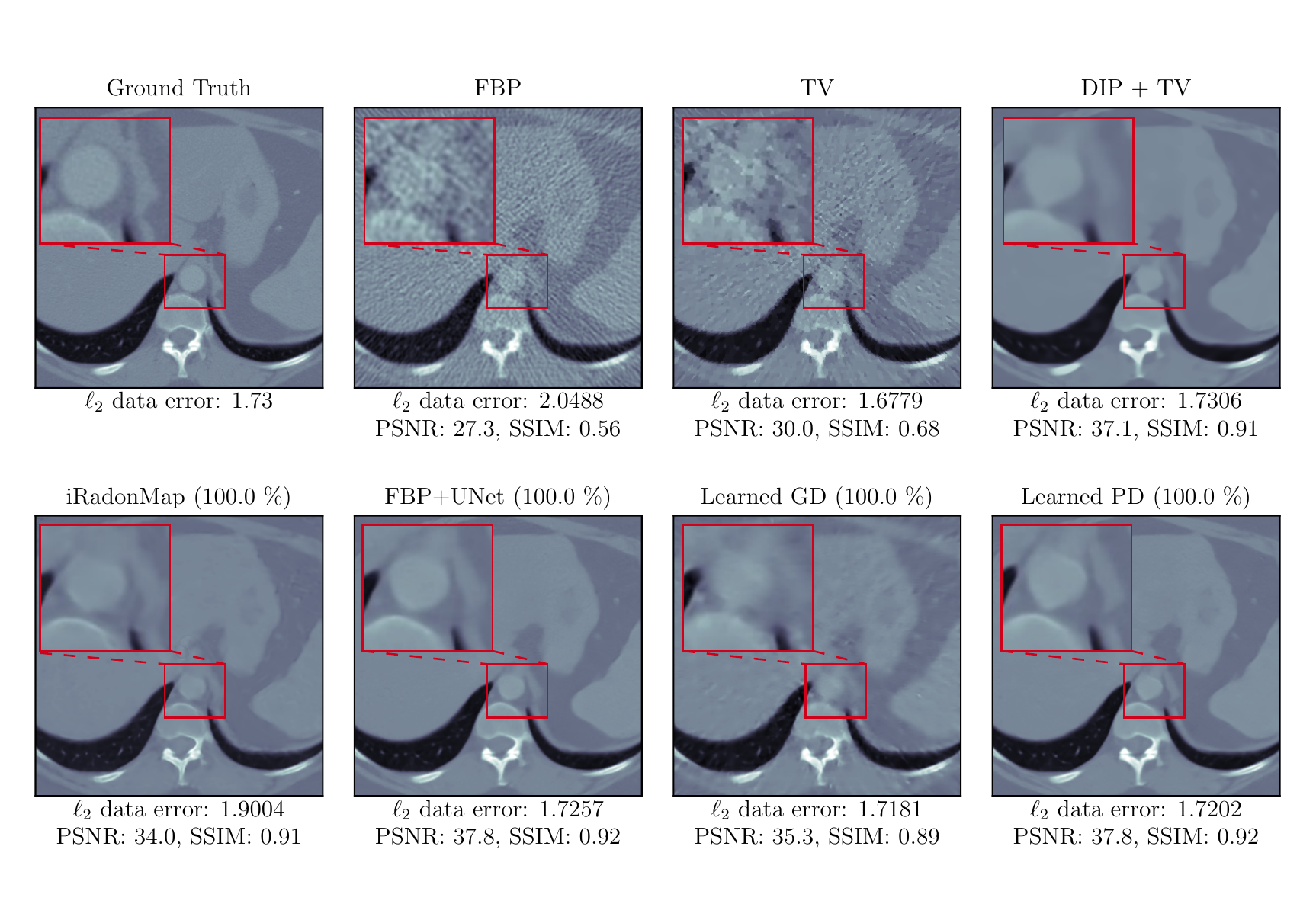}
\includegraphics[width=0.95\textwidth]{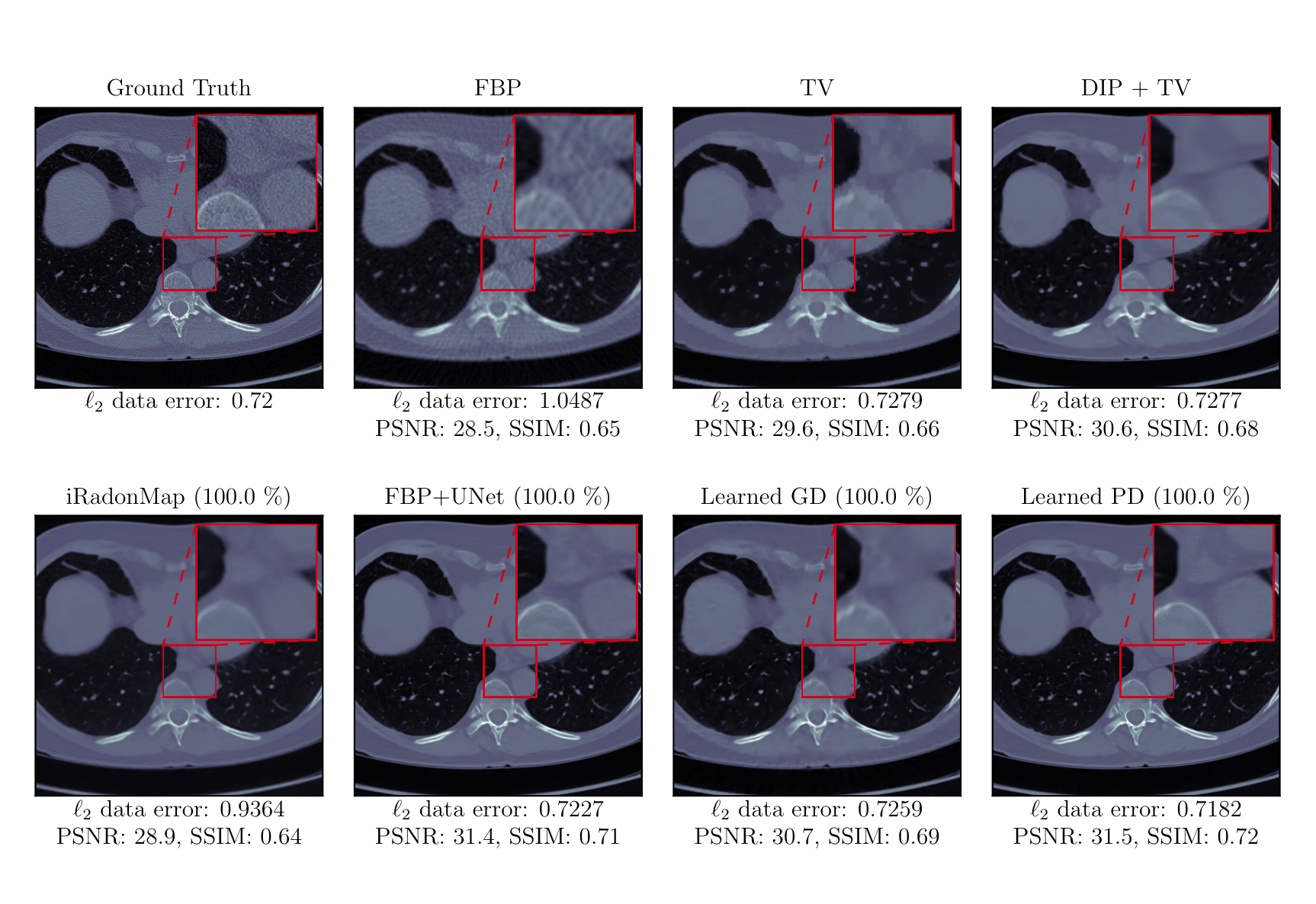}
\caption{Reconstructions using all the analyzed methods for test samples from the LoDoPaB dataset.}
\label{fig:lodopab_all}
\end{figure}

\section{Training details}
\label{ap:details}
\begin{table}[ht]
    \centering
    \begin{tabular}{lrrrrrrrrrr}
\% &\num{0.1}&\num{0.2}&\num{0.5}&\num{1.0}&\num{2.0}&\num{5.0}&\num{10.0}&\num{25.0}&\num{50.0}&\num{100.0}\\\toprule
\#train &\num{32}&\num{64}&\num{160}&\num{320}&\num{640}&\num{1600}&\num{3200}&\num{8000}&\num{16000}&\num{32000}\\
\#val &\num{3}&\num{6}&\num{16}&\num{32}&\num{64}&\num{160}&\num{320}&\num{800}&\num{1600}&\num{3200}\\\bottomrule
\end{tabular}

    \caption{The amounts of training and validation pairs from the Ellipses dataset used for the benchmark in Section~\ref{sec:benchmark}.}
\end{table}
\begin{table}[ht]
    \centering
    \begin{tabular}{lrrrrr}
\% &\num{0.01}&\num{0.1}&\num{1.0}&\num{10.0}&\num{100.0}\\\toprule
\#train &\num{3}&\num{35}&\num{358}&\num{3582}&\num{35820}\\
\#val &\num{1}&\num{3}&\num{35}&\num{352}&\num{3522}\\\midrule
\#patients train &\num{1}&\num{1}&\num{7}&\num{64}&\num{632}\\
\#patients val &\num{1}&\num{1}&\num{1}&\num{6}&\num{60}\\\bottomrule
\end{tabular}

    \caption{The amounts of training and validation pairs from the LoDoPaB dataset used for the benchmark in Section~\ref{sec:benchmark}. The last two lines denote the numbers of patients of whom images are included.}
\end{table}

\end{document}